\begin{document}

\title{Superreplication when trading at market indifference prices.}

\author{P. Bank and S. G{\"o}kay\\  Technische Universit{\"a}t Berlin\\
  Institut f{\"u}r Mathematik\\
  Stra{\ss}e des 17. Juni 135, 10623 Berlin, Germany \\
  (bank@math.tu-berlin.de, gokay@math.tu-berlin.de) \thanks{MATHEON, Einstein}} 

\date{\today}

\maketitle

\begin{abstract}
  We study superreplication of European contingent claims in discrete
  time in a large trader model with market indifference prices
  recently proposed by Bank and Kramkov.  We introduce a suitable
  notion of efficient friction in this framework, adopting a
  terminology introduced by Kabanov, Rasonyi, and Stricker in the
  context of models with proportional transaction costs. In our
  framework, efficient friction ensures that large positions of the
  investor may lead to large losses, a fact from which we derive the
  existence of superreplicating strategies. We illustrate that without
  this condition there may be no superreplicating strategy with
  minimal costs. In our main result, we establish efficient friction
  under a tail condition on the conditional distributions of the
  traded securities and under an asymptotic criterion on risk
  aversions of the market makers. Another result asserts that strict
  monotonicity of the conditional essential infima and suprema of the
  security prices is sufficient for efficient friction.  We give
  examples that satisfy the assumptions in our conditions, which
  include non-degenerate finite sample space models as well as Levy
  processes and an affine stochastic volatility model of
  Barndorff-Nielsen-Shepard type.
\end{abstract}

\begin{description}
\item[Keywords:] utility indifference
  prices, large investor, liquidity, superreplication, monotone
  exponential tails
\item[JEL Classification:] G11, G12, G13, C61. 
\item[AMS Subject Classification (2010):] 52A41, 60G35, 90C30, 91G20, 97M30.
\end{description}

\section{Introduction}

The problem of superreplicating a contingent claim has been widely
studied in Mathematical Finance. For frictionless diffusion models
this stochastic control problem was first addressed by
\citet{eq1995} and, for general semimartingales, by
\citet{k1996}. Also models with market frictions have received a lot
of attention. For markets with portfolio constraints such as the
prohibition of short selling see, e.g., \citet{ck1993}, \citet{jk1995},
\citet{fk1997}, \citet{bcs1998}. For markets with proportional transaction
costs similarly far reaching investigations have resulted from the
work of, e.g., \citet{ssc1995, cpt1999, ks2002, cs2006,  grs2008}. For a
complete treatment and list of references, we refer to the book, 
\citet{ks2009}. Recently, superreplication has also been studied in
nonlinear models capturing illiquidity effects by, e.g., \citet{cst2010, gs2012, 
pst2012, ds2013}.

The present paper focusses on the nonlinear large investor model with
market indifference prices developed in \citet{bk2011a,
  bk2011b, bk2013}. By contrast to the model discussed in
\citet{cjp2004, cst2010, gs2012, ds2013}, this model
does not postulate a local cost term depending on the size of the
current transaction which would be attributed to a \emph{temporary}
market impact. Instead, market indifference prices can be viewed as a
way to specify systematically the \emph{permanent} price impact of a
transaction. While the impact is adverse in the sense that a large buy
order will substantially drive up marginal prices, it is far from
obvious when this impact actually results in an \emph{efficient
  friction}, i.e., in real costs to the large investor. Indeed, if
after a purchase no new information becomes available about the
ultimate value of the traded securities, a subsequent sale of the same
position will take place at the same marginal prices, now processed in
reverse order. Hence, the sale will recover all the expenses incurred
from the initial purchase leading to a `free roundtrip'. This is
similar to a phenomenon already observed in multi-variate asset price
models with proportional transaction costs where a suitable notion of
efficient friction was introduced by \citet{krs2002} to ensure
that trades actually incur costs. 

With the same purpose in mind, we adopt this terminology for our
framework despite the mentioned differences in the models and we thus
say that, essentially, a model with market indifference prices
exhibits efficient friction if an investor engaging in ever larger
positions may face ever higher losses with positive probability. A
simple two-period example with a single market maker shows that
without this weak condition superreplicating strategies may fail to
exist. Our main result, Theorem~\ref{thm:1}, thus develops a readily
verifiable criterion for efficient friction to hold. This criterion is
based on tail conditions for both the conditional distributions of
terminal values of traded securities and the asymptotic risk aversions
of market makers. Corollary~\ref{cor:1} then shows that efficient
friction indeed implies the existence of superreplicating
strategies. In models where conditional essential infima and suprema
of security prices are attained on sets with positive probability,
Theorem~\ref{thm:2} shows that the strict monotonicity of these
extrema is also sufficient for efficient friction, even without extra
assumptions on the market makers' asymptotic risk aversions. We also
characterize when a binomial model with an exponential market maker is
complete and find again that this depends on a monotonicity condition
on conditional extrema. As examples where our tail condition on
conditional distributions holds we consider L{\'e}vy processes and a
Barndorff-Nielsen-Shepard model with stochastic volatility; see
Section~\ref{sec:examples}.

\section{Problem formulation and motivation}

Before we can properly formulate the superreplication problem we want to
address in Section~\ref{sec:problem}, we first have to introduce
the modeling framework we are going to use.

\subsection{Trading at market indifference prices}\label{sec:model}

For our model we shall use a discrete-time version of the framework
introduced in~\citet{bk2011a, bk2011b} which we shall outline
briefly in this section for the reader's convenience. Specifically, we
consider a financial market where $M \in \cbr{1,2,\ldots}$ market makers
quote prices for $J \in \cbr{1,2,\dots}$ securities with respective random payoffs
$\psi=(\psi^1,\ldots,\psi^J)$ at time $T$. For simplicity, the market
makers have a common view of uncertainty which is described by a
filtered probability space $(\Omega,\cF,(\cF_t)_{t=0,\ldots,T},\P)$
where $\cF_0 = \cbr{\varnothing,\Omega}$ up to $\P$-null sets; in
particular $\psi \in \mathbf{L}^0(\cF_T, \mathbb{R^J})$.
The market makers, however, may have different attitudes towards risk:

\begin{Assumption}\label{asp:1}
Each market maker $m=1,\ldots,M$ has a utility function $u_m :
\mathbb{R} \to \mathbb{R}$ which is strictly concave, increasing,
twice continuous differentiable with
$$
\lim_{x \uparrow \infty} u_m(x)=0.
$$
Moreover, absolute risk aversion is bounded in the sense that
\begin{equation}\label{eq:1}
\frac1c \leq a_m(x) \set -\frac{u_m''(x)}{u'_m(x)} \leq c, \; x \in \mathbb{R}, \text{ for
some } c>0.
\end{equation}
\end{Assumption}

The market makers shall be allowed to trade freely among themselves
(e.g. using a complete OTC market). As a result, they always allocate
their total wealth in a (conditionally) Pareto optimal way, i.e., such
that no reallocation of wealth is possible which would leave one
market maker better off and none of them worse off in terms of (conditional) 
expected utilities. It is well-known that mathematically such allocations 
are most conveniently described as the
maximizers of the representative agent's utility function
\begin{equation}\label{eq:2}
r(v, x) \set \sup_{x^1 + \ldots + x^m = x} \sum_{m =1}^M v_m u_m(x^m), \quad x \in \mathbb{R},
\end{equation}
where $v \in (0,\infty)^M$ assigns weights to our market makers. For
instance, if $\alpha_0 = (\alpha^m_0)_{m=1,\ldots,M}$ denotes the
Pareto optimal initial allocation of wealth among the market makers,
there is a vector $v_0 \in (0,\infty)^M$, unique up to scaling by a
positive constant, such that
$$
v^m_0 u_m'(\alpha^m_0) = \partial_x r(v_0,\Sigma_0), \; m=1,\ldots,M,
$$
where 
$$
\Sigma_0 \set \sum_{m}^M \alpha^m_0
$$ 
denotes the market makers' total initial endowment; see, e.g.,
Lemma~3.2 in~\cite{bk2011a}.

More generally, if by time $t=0,\ldots,T$ the market makers have
jointly acquired in addition $x \in \RR$ units of cash and $q \in \RR^J$
securities $\psi$, the resulting total endowment
$$
\Sigma(x,q) \set \Sigma_0 + x + \abr{q,\psi}
$$ 
will lead to the representative agent's expected utility
$$
F_t(v,x,q) \set \condexp{r(v,\Sigma(x,q))}.
$$
Theorem~4.1 in \cite{bk2013} shows that for each $t = 0, \ldots, T$, this is a saddle
function of class $C^2$ in the variables
$$
(v,x,q) \in \mathbf{A} \set (0,\infty)^M \times \RR \times \RR^J,
$$ 
if, in addition to Assumption~\ref{asp:1}, we impose the following
assumption:
\begin{Assumption}\label{asp:2}
  For any $x \in \mathbb{R}$ and $q \in \mathbb{R}^J$ there is an
  allocation $\beta \in \mathbf{L}^0(\mathcal{F}_T, \mathbb{R}^M)$ with total
  endowment $\Sigma(x,q)$ such that
\begin{equation*}
\E \left[ u_m(\beta^m) \right] > - \infty, \ \ m=1, 2, \ldots, M.
\end{equation*}
\end{Assumption}

Theorem~4.1 in~\cite{bk2013} shows furthermore that also 
$$
G_t(u,y,q) \set \sup_{v \in (0,\infty)^M} \inf_{x \in \RR}
\cbr{\abr{u,v}+xy-F_t(v,x,q)}
$$
is a twice continuously differentiable saddle function of the
variables
$$
(u,v,q) \in \mathbf{B} \set (-\infty,0)^M \times (0,\infty) \times \RR^J
$$
and that $G_t$ and $F_t$ are conjugate in the sense that
conversely
$$
F_t(v,x,q) = \sup_{v \in (0,\infty)^M} \inf_{x \in \RR}
\cbr{\abr{u,v}+xy-G_t(u,y,q)}, \; (v,x,q) \in \mathbf{A}.
$$

In the sequel, we shall study how a single large investor can trade
with the market makers in order to hedge against a liability $H$ with
maturity $T$ by following a judiciously chosen dynamic strategy. So,
let $Q=(Q_t)_{t=1,\ldots,T}$ denote the predictable positions the
large investor will ask our market makers to hold in the marketed
securities $\psi$. We then have to describe the predictable cash
balance $X=(X_t)_{t=1,\ldots,T}$ the market makers will ask for as
compensation. As pointed out in~\citet{bk2011a, bk2011b} one
natural possibility is to have $X_t \in \mathbf{L}^0(\cF_{t-1},\RR)$
determined by utility indifference, i.e., by requiring that
\begin{equation}\label{eq:3}
  U^m_{t-1} = \condexp[\cF_{t-1}]{u_m(\alpha^m_t)}, \; m=1,\ldots,M,
\end{equation}
where $U_{t-1}= (U^m_{t-1})_{m=1,\ldots,M} \in
\mathbf{L}^0(\cF_{t-1},(-\infty,0)^M)$ records the market makers'
conditional expected utilities before the transaction and where
$\alpha_t = (\alpha^m_t)_{m=1,\ldots,M} \in
\mathbf{L}^0(\cF_{T},\RR^M)$ denotes the (as it turns out) only Pareto
optimal allocation of
$$
\Sigma_t \set \Sigma(X_t,Q_t) = \Sigma_0 + X_t+\abr{Q_t,\psi}
$$
for which the indifference relation~\eqref{eq:3} holds. In fact, as
shown by Theorem~4.1 in~\cite{bk2011b}
this cash balance is given by
\begin{equation}\label{eq:4}
  X_t = G_{t-1}(U_{t-1},1,Q_t)
\end{equation}
 and the Pareto allocation is determined by the weights
\begin{equation}\label{eq:5}
  V_t = \partial_u G_{t-1}(U_{t-1},1,Q_t)\,.
\end{equation}
Moreover, the market makers' utilities at time $t=1,\ldots,T$ are
given by
\begin{equation}\label{eq:6}
  U_t = \partial_v F_t(V_t, X_t, Q_t) = (\condexp{u_m(\alpha^m_t)})_{m=1,\ldots,M}\,.
\end{equation}
Hence, for any strategy $Q$ the dynamics of the system are uniquely
determined by the initial level of our market makers' utilities and
equations~\eqref{eq:4}, \eqref{eq:5}, and~\eqref{eq:6}.

It will turn out to be convenient to denote by
$U^{s,u,Q}=(U^{s,u,Q}_t)_{t=s,\ldots,T}$,
$X^{s,u,Q}=(X^{s,u,Q}_t)_{t=s+1,\ldots,T}$, and
$V^{s,u,Q}=(V^{s,u,Q}_t)_{t=s+1,\ldots,T}$ the evolution of the system
with~\eqref{eq:4}, \eqref{eq:5}, and~\eqref{eq:6} when started at
$$
U^{s,u,Q}_s \set u \in \mathbf{L}^0(\cF_s,(-\infty,0)^M)
$$
at some time $s \in \cbr{0,\ldots,T}$.

\subsection{The superreplication problem of a large investor}\label{sec:problem}

With the market dynamics defined for any predictable strategy, we are
now in a position to formulate the large investor's superreplication
problem.  As usual we shall say that an initial capital $\pi$ suffices
to superreplicate a contingent claim with payoff $H \in
L^0(\cF_T,\RR)$ at time $T$ if there is a strategy $Q$ which generates
profits or losses $PL^Q_T$ by time $T$ such that
\begin{equation}\label{eq:7}
  H \leq \pi + PL^Q_T\,.
\end{equation}
By construction of our market model, the large investor's gains are
the market makers' joint losses and so~\eqref{eq:7} can be recast as
the requirement that
\begin{equation}\label{eq:8}
  X^{0,u_0,Q}_T+\abr{Q_T,\psi} \leq \pi - H\,
\end{equation}
where $u_0 = (\E u_m(\alpha^m_0))_{m=1,\ldots,M}$ denotes the initial
expected utility levels for our market makers.  So the
superreplication price of $H$ turns out to be
\begin{equation}\label{eq:9}
  \pi^H \set \inf\cbr{ \pi \in \RR \;:\; \eqref{eq:8} \text{ holds for
  some predictable } Q=(Q_t)_{t=1,\ldots,T}}\,.
\end{equation}
\begin{Remark} \label{rem:1}
As usual, relations such as~\eqref{eq:7} and~\eqref{eq:8} are tacitly
understood in the $\P$-almost sure sense.
\end{Remark}

Obviously, if for some initial capital $\pi$ a strategy $Q$ can be
found which replicates $H$, i.e., for which we obtain equality
in~\eqref{eq:7} or, equivalently, \eqref{eq:8}, we have $\pi^H \leq
\pi$. In fact, we would then have
equality in this relation in arbitrage free, linear, discrete-time models typically
considered in Mathematical Finance. It may thus be interesting to note
that in the highly nonlinear model under investigation here this may
very well not be the case. Indeed, as pointed out in Remark~3.12 of
\citet{bk2011b}, there can be two strategies $Q$, $Q'$ which,
starting from different initial capitals $x<x'$, give the large
investor the same terminal wealth $x+PL^{Q}_T = x'+PL^{Q'}_T\set H$
without this violating absence of arbitrage. As a result, in our
nonlinear illiquid financial model pricing $H$ by replication may not
be the appropriate concept to investigate. By contrast, the notion of
superreplication clearly still makes sense in such a setting but it
becomes an issue whether one can ensure existence of a strategy which
superreplicates at the superreplication price $\pi^H$, i.e., whether
the infimum in~\eqref{eq:9} is actually a minimum. This issue will be
addressed by our main results.

\section{Main results}\label{sec:main-results}

The main goal of this paper is to identify readily verifiable
conditions on the traded payoffs and the market makers' risk
preferences ensuring that any claim $H$ can be superreplicated
starting from the superreplication price. This will be accomplished in
Theorem~\ref{thm:1} and its Corollary~\ref{cor:1}
below. Theorem~\ref{thm:1} identifies conditions on the tails of both
the payoff's conditional distribution and of the market makers'
utilities that ensure a form of efficient friction to hold in our
model. Corollary~\ref{cor:1} then shows that in models with efficient
friction optimal superreplication strategies exist. 

\subsection{Exponential tails decreasing in time}

Our first condition ensures that the market makers' assessment of the
riskiness of the payoff $\psi$ may change sufficiently between any two
trading periods. The counterexample given in
Section~\ref{sec:counterexample} shows that a condition of this nature
is in fact necessary even when we restrict ourselves to the
particularly simple case of a single market maker with exponential
utility.

To formulate our condition we introduce the following (partial)
ordering relation between any two distributions $\mu$, $\nu$ on
$(\RR^J,\cB(\RR^J))$ with finite exponential moments:
\begin{equation}\label{eq:10}
\mu \prec \nu \quad :\Longleftrightarrow \quad \lim_{|q| \uparrow
  \infty} \frac{\int \exp{\left(\abr{q,x}\right)} \mu(dx)}{\int \exp{\left(\abr{q,x}\right)} \nu(dx)}=0\,.
\end{equation}
Thus $\mu \prec \nu$ will hold if the exponential tails of~$\mu$ are
dominated by those of~$\nu$. 

The condition we shall impose on the payoff profile $\psi$ amounts to
the requirement that the conditional distributions of $\psi$ along the
filtration $(\cF_t)_{t=0,\ldots,T}$,
\begin{align}\label{eq:11}
\nu_t \set \condprob{\psi \in \cdot}, \; t=0,\ldots,T.
\end{align}
have the potential to decrease at any time in the sense that
\begin{equation}\label{eq:12}
\mathbb{P}\left[\nu_t \prec \nu_{t-1} | \mathcal{F}_{t-1} \right] > 0 \text{ for all } t=1,\ldots,T\,.
\end{equation}

In Section~\ref{sec:examples} below we shall verify~\eqref{eq:12} if
$\psi$ is the value at time $T$ of a Brownian motion or even of a
L{\'e}vy process (monitored at discrete points in time).  Similarly,
one can actually consider terminal values of affine stock price models
such as the Barndorff-Nielsen-Shepard style model presented in the
same section.

\subsection{Asymptotic risk aversion}

The second condition we have to impose focusses on the market makers'
preferences. It essentially amounts to the requirement that their
absolute risk aversion at $-\infty$ stabilizes at a higher level than
at $+\infty$. For a generic strictly concave, increasing utility
function $u \in C^2(\RR)$ this amounts to the condition that the
absolute risk aversion $a(x)\set-{u''(x)}/{u'(x)}$ satisfies
\begin{equation}\label{eq:13}
\int_{-\infty}^0 |\overline{a}-a(x)|
\,dx+\int_0^\infty |\underline{a}-a(x)| \,dx<\infty \text{ for some }
0<\underline{a} \leq \overline{a}<\infty\,.
\end{equation}
It is easy to see that utility functions which are mixtures of
exponential utilities of the form
$$
u(x) \set -\int_{\underline{a}}^{\overline{a}} \exp(-ax)
\,\Upsilon(da), \; x \in \RR,
$$
satisfy condition~\eqref{eq:13}, e.g., if the finite Borel measure
$\Upsilon$ charges both $\underline{a}$ and $\overline{a}$. Note that
the $\sup$-convolution describing the representative agent's
utility~\eqref{eq:2} will inherit this property when each market
maker's utility satisfies it; see Lemma~\ref{lem:2} below.

\subsection{Efficient friction and existence of superreplication
  strategies}
  
In their investigation of the superreplication problem under
proportional transaction costs, \citet{krs2002}
introduced a notion of efficient friction which ensured, essentially,
that trading incurs costs. We adopt this idea and terminology for the
purposes of our nonlinear model in the following definition.

\begin{Definition} \label{def:1}
  A financial model in the framework \cite{bk2011a} exhibits efficient friction, if for any time
  $t=1,\ldots,T$, for any choice of utility levels $u^n \in \cF_{t-1}$
  with $-\infty < \inf_{m,n} u^n_m \leq \sup_{m,n} u^n_m<0$, and for
  any sequence of strategies $Q^n$ such that $\cbr{\lim_n
  |Q^n_t|=+\infty}$ has positive probability, also the large
  investor's losses $X^{t-1,u^n,Q^n}_T+\abr{Q^n_T,\psi}$ converge
  to $+\infty$ in probability on a set with positive probability.
\end{Definition}

With this notion at hand, we are now in a position to state the main result of this paper which
we will prove in Section~\ref{sec:mainproof}:

\begin{Theorem}\label{thm:1}
  Let Assumptions~\ref{asp:1} and~\ref{asp:2} hold and assume that the
  market makers' total initial endowment is of the form $\Sigma_0 =
  \widetilde{\Sigma}_0+\abr{q_0,\psi}$ for some bounded random
  variable $\widetilde{\Sigma}_0$. Then our model exhibits efficient
  friction, if $\psi$ exhibits potentially decreasing exponential
  tails in the sense that condition~\eqref{eq:12} holds and if, in
  addition, our market makers' risk aversions stabilize at a higher
  level at $-\infty$ than at $+\infty$ in the sense that they
  satisfy~\eqref{eq:13} for constants $0<\underline{a}_m \leq
  \overline{a}^m<\infty$, $m=1,\ldots,M$.
\end{Theorem}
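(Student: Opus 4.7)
The plan is to combine three ingredients in order: a localization on a positive-probability event furnished by~\eqref{eq:12}, a single-period estimate showing $X^{t-1,u^n,Q^n}_t+\langle Q^n_t,\psi\rangle\to+\infty$ at time $t$ as $|Q^n_t|\to\infty$, and a propagation bound showing that no continuation strategy after time~$t$ can undo more than a bounded amount of that loss.

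For the localization, set $A_{t-1}\set\{\nu_t\prec\nu_{t-1}\}\cap\{\lim_n|Q^n_t|=+\infty\}$. Since $\{\lim_n|Q^n_t|=+\infty\}\in\mathcal{F}_{t-1}$ has positive probability by assumption and~\eqref{eq:12} ensures $\mathbb{P}[\nu_t\prec\nu_{t-1}\mid\mathcal{F}_{t-1}]>0$ a.s., taking expectations yields $\mathbb{P}(A_{t-1})>0$. Passing to a further $\mathcal{F}_{t-1}$-measurable sub-event of still positive measure, I may additionally assume that $U^n_{t-1}$ remains in a fixed compact subset of $(-\infty,0)^M$ and that $\widetilde{\Sigma}_0$ is bounded by a deterministic constant, so that all quantities at time $t-1$ are uniformly controlled in $n$.

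Next comes the single-period estimate on $A_{t-1}$. I would first invoke the extension of~\eqref{eq:13} to the representative agent $r$ promised as Lemma~\ref{lem:2}, yielding constants $0<\underline{a}\leq\overline{a}<\infty$ that bound its asymptotic absolute risk aversion. Using $X_t=G_{t-1}(U_{t-1},1,Q_t)$ together with sandwich estimates between exponential utilities of rates $\underline{a}$ and $\overline{a}$, the cash balance $X_t$ should behave, up to an $O(1)$ error uniform on $A_{t-1}$, like a log-moment-generating functional of $\psi$ computed under $\nu_{t-1}$. The tail dominance $\nu_t\prec\nu_{t-1}$ then forces the market makers' asymptotic certificate for holding $Q^n_t$ (computed from $\nu_{t-1}$) to outgrow the actual payoff $\langle Q^n_t,\psi\rangle$ drawn from $\nu_t$, so that $X^{t-1,u^n,Q^n}_t+\langle Q^n_t,\psi\rangle\to+\infty$ in conditional probability on $A_{t-1}$.

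For the propagation, decompose
$$
X^{t-1,u^n,Q^n}_T+\langle Q^n_T,\psi\rangle = \bigl[X^{t-1,u^n,Q^n}_t+\langle Q^n_t,\psi\rangle\bigr]+R^n
$$
and bound $R^n$ from below by a constant depending only on the uniform bounds on $U^n_{t-1}$ and on $\widetilde{\Sigma}_0$. Here I would exploit that~\eqref{eq:3} pins each market maker's conditional expected utility across every subsequent transaction, so the representative agent's utility at time $T$ can absorb at most a bounded amount of additional cash without violating indifference; the globally bounded risk aversion~\eqref{eq:1} and the structural identity $\Sigma_0=\widetilde{\Sigma}_0+\langle q_0,\psi\rangle$ are what convert this qualitative statement into a uniform lower bound on $R^n$. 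The main obstacle of the proof lies precisely in this last step: the continuation $Q^n_{t+1},\ldots,Q^n_T$ is arbitrary and may react to the state left by the large trade at time~$t$, so the lower bound on $R^n$ must be robust against \emph{all} such continuations, and it is only by combining the two-sided exponential asymptotics~\eqref{eq:13}, the representative-agent inheritance from Lemma~\ref{lem:2}, and the conjugacy of $F_t$ and $G_t$ that such a robust bound can plausibly be achieved.
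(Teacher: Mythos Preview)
Your decomposition into a single-period estimate followed by a propagation step contains a genuine gap in the propagation. The remainder $R^n = (X^n_T+\langle Q^n_T,\psi\rangle)-(X^n_t+\langle Q^n_t,\psi\rangle)$ is \emph{not} bounded below by a constant depending only on the uniform bounds on $U^n_{t-1}$ and $\widetilde{\Sigma}_0$. The indifference relation~\eqref{eq:3} constrains only $\mathcal{F}_s$-conditional expected utilities at each subsequent trading time $s\geq t$; it says nothing pathwise. In particular, an arbitrary continuation $Q^n_{t+1},\ldots,Q^n_T$ can move the market makers' terminal allocation $\alpha^n_T$ arbitrarily far below $\alpha^n_t = \Sigma_0+X^n_t+\langle Q^n_t,\psi\rangle$ on individual scenarios, provided this is compensated in $\mathcal{F}_t$-conditional expectation. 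So the ``robust bound'' you concede to be the main obstacle is in fact false as stated, and the ingredients you list (two-sided exponential asymptotics, representative-agent inheritance, $F_t$/$G_t$ conjugacy) do not manufacture one.

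The paper sidesteps this obstacle entirely by working at the level of the market makers' \emph{utilities} rather than their cash positions. The key estimate (Lemma~\ref{lem:3}, applied via Lemmas~\ref{lem:1} and~\ref{lem:2} to $r(\overline{V}_t,\cdot)$) shows directly that the $\mathcal{F}_t$-conditional expected representative utility, and hence each $U^{m,n}_t$, tends to $0$ on $D_t=\{\nu_t\prec\nu_{t-1}\}\cap\{\lim_n|Q^n_t|=\infty\}$ as the ratio $\mathbb{E}[e^{-\overline{a}\Sigma(0,Q^n_t)}\mid\mathcal{F}_t]/\mathbb{E}[e^{-\overline{a}\Sigma(0,Q^n_t)}\mid\mathcal{F}_{t-1}]$ collapses. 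Propagation is then immediate and \emph{strategy-free}: $U^n$ is a martingale, so $U^n_t\to 0$ on $D_t$ forces $U^n_T\to 0$ in probability on $D_t$, and the identity $X^n_T+\langle Q^n_T,\psi\rangle=\sum_m u_m^{-1}(U^{m,n}_T)-\Sigma_0$ converts this back to cash. Your step~2 conclusion ``$X^n_t+\langle Q^n_t,\psi\rangle\to+\infty$ in conditional probability'' is strictly weaker than $U^n_t\to 0$ (convergence in probability of the allocation does not by itself force the conditional expected utility to vanish without a uniform-integrability argument you do not supply), and it is precisely this weaker statement that fails to propagate.
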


As a corollary, let us note that indeed efficient friction ensures the existence of optimal superreplicating
strategies.

\begin{Corollary}\label{cor:1}
  Let Assumptions~\ref{asp:1} and~\ref{asp:2} hold. If our model exhibits efficient friction, 
  any contingent claim $H$ can be superreplicated
  by an investment strategy $Q^H$ with minimal initial capital $\pi=\pi^H$
  as in~\eqref{eq:9}.
\end{Corollary}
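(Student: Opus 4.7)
The plan is to take a minimizing sequence $(\pi^n, Q^n)$ with $\pi^n\downarrow\pi^H$ and predictable $Q^n=(Q^n_t)_{t=1,\dots,T}$ satisfying the superreplication bound
\begin{equation*}
  X^{0,u_0,Q^n}_T + \abr{Q^n_T,\psi} \leq \pi^n - H \leq \pi^H+1-H \quad\text{a.s.,}
\end{equation*}
to extract from it a (possibly along a random subsequence) a.s.\ cluster point $Q^*$ that remains predictable, and then to pass to the limit in this inequality. The pathwise continuity of $Q\mapsto X^{0,u_0,Q}_T$, which follows by iteration from the $C^2$-regularity of $G_t$ and $F_t$ (Theorem~4.1 of~\cite{bk2013}), makes the final step routine, so the whole difficulty lies in the extraction.

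I would proceed by induction on $t=1,\dots,T$ to show that along a suitably refined (random) subsequence $\sup_n|Q^n_t|<\infty$ a.s.; the measurable selection lemma of Kabanov--Stricker (see, e.g., Lemma~2 in~\cite{krs2002}) then yields a further random subsequence along which $Q^n_t$ converges a.s.\ to an $\cF_{t-1}$-measurable $Q^*_t$. At $t=1$, the values $Q^n_1\in\RR^J$ are deterministic; if $|Q^n_1|\to\infty$ along a subsequence, efficient friction (Definition~\ref{def:1}) applied with the constant utility levels $u^n\equiv u_0$ would force $X^{0,u_0,Q^n}_T+\abr{Q^n_T,\psi}\to+\infty$ on a set of positive probability, contradicting the a.s.\ upper bound $\pi^H+1-H$. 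Passing to a convergent subsequence yields $Q^*_1$.

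For the inductive step at $t>1$, writing $U^n_{t-1}$ for $U^{0,u_0,Q^n}_{t-1}$, the inductive hypothesis and continuity of the dynamics give $U^n_{t-1}\to U^*_{t-1}$ a.s.\ in $(-\infty,0)^M$, so $\sup_n|U^n_{t-1}|<\infty$ a.s. Suppose for contradiction that $A\set\{\sup_n|Q^n_t|=+\infty\}\in\cF_{t-1}$ has positive probability; after one more measurable extraction, $|Q^n_t|\to+\infty$ a.s.\ on $A$. The key localization is to choose $K$ large enough that
\begin{equation*}
  B\set \{\sup_n|U^n_{t-1}|\leq K\}\cap\{U^{*,m}_{t-1}\leq -1/K,\ m=1,\dots,M\}\in\cF_{t-1}
\end{equation*}
satisfies $\P[A\cap B]>0$, and to modify the data by letting $\tilde u^n = U^n_{t-1}$ on $B$ and $\tilde u^n$ equal to some fixed value in $[-K,-1/K]^M$ on $B^c$, and $\tilde Q^n_s = Q^n_s\mathbf{1}_B$ for $s\geq t$. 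Then $\tilde u^n$ satisfies the uniform bounds required in Definition~\ref{def:1}. Because the indifference dynamics \eqref{eq:4}--\eqref{eq:6} are Markov in the current state $(U_{t-1},Q_t)$, on $B$ the trajectory $(X^{t-1,\tilde u^n,\tilde Q^n},U^{t-1,\tilde u^n,\tilde Q^n})$ coincides pathwise with the restart of $(X^{0,u_0,Q^n},U^{0,u_0,Q^n})$ at $t-1$; in particular the modified losses equal $X^{0,u_0,Q^n}_T+\abr{Q^n_T,\psi}\leq\pi^H+1-H$ on $B$, and off $B$ they reduce to a deterministic value. Hence they cannot tend to $+\infty$ on any set of positive measure, contradicting efficient friction, since $|\tilde Q^n_t|\to+\infty$ on $A\cap B$ of positive measure.

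Iterating this bound and diagonalizing across $t=1,\dots,T$ produces a single random subsequence along which $Q^n\to Q^*$ a.s.\ for a predictable $Q^*$; passing to the limit in the superreplication inequality then gives $X^{0,u_0,Q^*}_T+\abr{Q^*_T,\psi}\leq\pi^H-H$, so $Q^H\set Q^*$ is the sought optimal superreplicating strategy. The main obstacle is the localization step above: the utility levels $U^n_{t-1}$ are only a.s.\ convergent, never uniformly bounded, so Definition~\ref{def:1} cannot be invoked directly, and the modification has to be arranged so that on $B$ the cut-down dynamics exactly reproduce the original losses while on $B^c$ they still fall within the hypotheses of efficient friction.
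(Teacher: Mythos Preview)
Your overall strategy coincides with the paper's: take a minimizing sequence, argue inductively over $t$ that efficient friction forces $\sup_n|Q^n_t|<\infty$ a.s., extract an a.s.\ convergent (random) subsequence via a Koml\'os-type lemma, use continuity of the dynamics \eqref{eq:4}--\eqref{eq:6} to carry the utility bounds forward, and pass to the limit in the superreplication inequality.

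The one substantive difference is your localization to the set $B$. The paper does without it: in Definition~\ref{def:1} the bounds $-\infty<\inf_{m,n}u^n_m\leq\sup_{m,n}u^n_m<0$ are understood in the $\P$-a.s.\ sense (the $u^n$ are $\cF_{t-1}$-measurable, so these extrema are themselves random), and the proof of Theorem~\ref{thm:1} only ever uses them pathwise. Under that reading, once $Q^n_s\to Q^*_s$ for $s<t$ along the current subsequence, continuity already gives $\inf_n U^{m,n}_{t-1}>-\infty$ and $\sup_n U^{m,n}_{t-1}<0$ a.s., so one may set $u^n:=U^{0,u_0,Q^n}_{t-1}$ and invoke efficient friction directly, exactly as the paper does. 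Your localization would be required only under a strict reading demanding deterministic bounds; note, however, that your $B$ as written does not deliver those either, since the condition $U^{*,m}_{t-1}\leq -1/K$ controls only the limit, not $\sup_n U^{m,n}_{t-1}$. Replacing it by $\sup_{m,n} U^{m,n}_{t-1}\leq -1/K$ (still of probability tending to $1$ as $K\uparrow\infty$, since $\sup_n U^{m,n}_{t-1}<0$ a.s.) closes this small gap.
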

\begin{proof}
Recalling the convention that $\inf \varnothing = +\infty$, we can
assume that there are finite $\pi^n$ and strategies $Q^n$ such that
\begin{equation}\label{eq:14}
\pi^n \downarrow \pi^H \ttext{ and } X^{0,u_0,Q^n}_T + \abr{Q^n_T,\psi}
\leq \pi^n-H, \text{a.s.} \; n=1,2,\ldots \,.
\end{equation}
In particular, we have that the large investor's losses are bounded
from above uniformly in $n$: 
\begin{equation}\label{eq:15}
  \sup_{n=1,2,\ldots} \cbr{X^{0,u_0,Q^n}_T +
    \abr{Q^n_T,\psi}}<\infty \text{ a.s}\,.
\end{equation}

We shall proceed inductively to construct a limiting strategy $Q^H$
which superreplicates $H$ starting with the minimal initial capital
$\pi^H$. In fact, for $t=1$ we can apply the efficient friction
property (with $u^n \set u_0$, $n=1,2,\ldots$) to obtain
that~\eqref{eq:15} rules out that $|Q^{n'}_1|$ explodes to $+\infty$
with positive probability along any subsequence $n'$. Hence, $\sup_n
|Q^n_1|<\infty$ and so we can choose a subsequence, again denoted by
$Q^n$, such that $Q^*_1 \set \lim_n Q^n_1$ exists in
$\mathbf{L}^0(\cF_0,\RR^J)$ (see, e.g., Lemma~1.64 in \citet{fs2011}).  The
continuity of our system dynamics specifying cash balances
$X^{0,u,Q^n}$ (cf.~\eqref{eq:4}), weights $V^{0,u,Q^n}$
(cf.~\eqref{eq:5}), and utilities $U^{0,u,Q^n}$ (cf.~\eqref{eq:6})
then ensures that the induced utilities $U^{0,u,Q^n}_1$ are bounded
away from zero and $-\infty$. We thus can now let $u^n \set
U^{0,u,Q^n}_1$, $n=1,2,\ldots$, and, observing that then
$X^{0,u_0,Q^n}_T=X^{1,u^n,Q^n}_T$, proceed successively in just the
same way to construct a limiting $Q^H_t \in \mathbf{L}^0(\cF_{t-1},\RR^J)$ for
$t=2$. The same reasoning applies for $t=3,\ldots,T$.

Since along the way the superreplication property~\eqref{eq:14} is
preserved for all the successive subsequences, the limiting strategy
$Q^H$ will also superreplicate but only need the minimal initial
capital $\pi^H$. The proof is accomplished.
\end{proof}

\subsection{Proof of Theorem~\ref{thm:1}}\label{sec:mainproof}

The proof of Theorem~\ref{thm:1} relies on three auxiliary
results. The first shows that when the market makers' risk aversion
stabilize asymptotically the same is true for the representative
agent's risk aversion:

\begin{Lemma}\label{lem:1}
  Assume the market makers have utility functions $u_m$,
  $m=1,\ldots,M$ satisfying Assumption~\ref{asp:1} and suppose in
  addition that risk aversions $a_m(x) \set -u''_m(x)/u'_m(x)$
  stabilize at $\pm \infty$ in the sense that~\eqref{eq:13} holds for
  $u \set u_m$ with respective constants $\underline{a} =\underline{a}_m$,
  $\overline{a}=\overline{a}_m$, $m=1,\ldots,M$.

  Then for any choice of $v \in (0,\infty)^M$ also the representative
  agent's utility $r(v,.)$ of~\eqref{eq:2} exhibits stabilizing
  risk aversion in the same sense namely at levels $\underline{a}$,
  $\overline{a}$ given by
  \[
  \frac1{\underline{a}} = \sum_{m=1}^M \frac1{\underline{a}_m} \text{
    and } \frac1{\overline{a}} = \sum_{m=1}^M
  \frac1{\overline{a}_m}\,.
  \]
\end{Lemma}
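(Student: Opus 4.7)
The plan is to work pointwise with the Pareto allocation realizing the sup-convolution in~\eqref{eq:2}, derive the classical aggregation identity for absolute risk aversions, and then transport the integrability~\eqref{eq:13} from each $u_m$ to the representative utility $r(v,\cdot)$ via a change of variables. Throughout I write $a_r(x)=-r_{xx}(v,x)/r_x(v,x)$.

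By Assumption~\ref{asp:1} each $u_m$ is strictly concave and $C^2$ with $1/c\le a_m\le c$, so for fixed $v\in(0,\infty)^M$ the supremum defining $r(v,x)$ is uniquely attained at a $C^1$ Pareto sharing rule $x\mapsto(x^m(x))_{m=1}^M$ characterized by $v_m u_m'(x^m(x))=r_x(v,x)$. Differentiating this identity in $x$, dividing by $v_m u_m'(x^m(x))$, and using $\sum_m(x^m)'(x)=1$ yields the standard aggregation formula
\[
\frac{1}{a_r(x)}=\sum_{m=1}^M\frac{1}{a_m(x^m(x))},\qquad (x^m)'(x)=\frac{a_r(x)}{a_m(x^m(x))}.
\]
In particular $a_r\in[1/c,c]$ and $(x^m)'\in[1/c^2,c^2]$, so each $x^m$ is a bi-Lipschitz $C^1$-map. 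To see that its image is all of $\RR$, note that as $x\to+\infty$ at least one $x^{m_0}(x)\to+\infty$, and $a_{m_0}\ge 1/c$ forces $u_{m_0}'$ to decay exponentially at $+\infty$; hence $r_x(v,x)=v_{m_0}u_{m_0}'(x^{m_0}(x))\to 0$, and then $v_m u_m'(x^m(x))=r_x(v,x)\to 0$ together with the fact that $u_m'$ is bounded below on compact sets forces $x^m(x)\to+\infty$ for every $m$. The symmetric argument handles $x\to-\infty$.

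From the aggregation identity and the uniform bounds (note $\underline{a}_m,\overline{a}_m\in[1/c,c]$ by~\eqref{eq:13}),
\[
|a_r(x)-\underline{a}|=a_r(x)\,\underline{a}\,\Bigl|\tfrac{1}{\underline{a}}-\tfrac{1}{a_r(x)}\Bigr|\le c^4\sum_{m=1}^M |a_m(x^m(x))-\underline{a}_m|,
\]
and analogously with $\overline{a}$, $\overline{a}_m$ replacing $\underline{a}$, $\underline{a}_m$. Substituting $y=x^m(x)$, using $dx=(a_m(y)/a_r(x))\,dy\le c^2\,dy$ and the surjectivity established above,
\[
\int_0^\infty|a_r(x)-\underline{a}|\,dx\le c^6\sum_{m=1}^M\int_{x^m(0)}^\infty|a_m(y)-\underline{a}_m|\,dy,
\]
which is finite by~\eqref{eq:13} for each $u_m$; any portion of the integral over a bounded interval $[x^m(0),0]$ contributes only a finite amount since $a_m$ is bounded. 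The analogous estimate on $(-\infty,0]$ completes the verification of~\eqref{eq:13} for $r(v,\cdot)$ with the claimed constants.

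The one delicate point is the surjectivity step: condition~\eqref{eq:13} does not on its own guarantee pointwise convergence $a_m(x)\to\underline{a}_m$ as $x\to+\infty$, so one cannot simply pass to the limit in the aggregation identity; instead the argument runs through the bi-Lipschitz nature of each sharing rule, which transports~\eqref{eq:13} from each $u_m$ to $r(v,\cdot)$ essentially verbatim.
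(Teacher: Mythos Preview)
Your proof is correct and follows essentially the same route as the paper's: the aggregation identity $1/a_r=\sum_m 1/a_m(\hat x^m)$, a Lipschitz-type estimate comparing $a_r$ to the harmonic mean of the $\underline{a}_m$'s, and then a change of variables $y=\hat x^m(x)$ using that $(\hat x^m)'$ is bounded away from $0$ and $\infty$. The paper streamlines the comparison step by observing that $f(a_1,\dots,a_M)=(\sum_m 1/a_m)^{-1}$ is $1$-Lipschitz for the $\ell^1$-norm, giving the cleaner bound $|a_r-\underline a|\le\sum_m|a_m(\hat x^m)-\underline a_m|$ in place of your $c^4$-estimate; and it does not argue surjectivity of the sharing rules, since after the substitution the upper limit $\lim_{x\to\infty}\hat x^m(x)$ can only make the resulting integral smaller. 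Your surjectivity argument is fine (indeed the sharing rules are bi-Lipschitz, though the explicit bounds $[1/c^2,c^2]$ should carry an $M$), but it is not needed for the finiteness claim.
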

\begin{proof}
 Let $a_r(v,x) \set -\partial_x^2 r(v,x)/\partial_x r(v,x)$, $x \in \RR$, denote the representative agent's 
absolute risk aversion. It is straightforward to check (see, e.g.,
\cite{bk2011a}) that
\begin{equation*} 
a_r(v,x) = \frac{1}{\sum_{m=1}^M
\frac{1}{a_m(\widehat{x}^m(v,x))}}\,,
\end{equation*} 
where $(\widehat{x}^m(v,x))_{m=1,\ldots,M}$ denotes the unique point
in $\RR^M$ at which the $\sup$ in~\eqref{eq:2} is attained.

Observing that the function $f(a_1, a_2, \ldots, a_m) \set \frac{1}{\sum_{m=1}^M
\frac{1}{a_m}}$ is Lipschitz continuous in $(a_m)_{m=1,\ldots,M} \in (0,\infty)^M$ with
constant $1$ for the $\|.\|_1$-norm on $\RR^M$ we first note that
\[
|a_r(v,x)-\underline{a}| = \left|f(a_1(\widehat{x}^1),
\ldots, a_M(\widehat{x}^M)) - f(\underline{a}_1, \ldots, \underline{a}_M)\right| \leq
\sum_{m=1}^M \left|a_m(\widehat{x}^m) - \underline{a}_m\right|\,.
\]
From, e.g., Lemma~3.2 in~\cite{bk2013} we obtain
\[ 
\partial_x \widehat{x}^m(v,x) =
\frac{1/a_m \left( \widehat{x}^m(v,x) \right)}{\sum_{k=1}^M 1/a_k \left(
\widehat{x}^k(v,x) \right)}, \; m=1,\ldots,M,
\] 
which is uniformly bounded away from 0 and 1. Hence,
\[
\int_0^{\infty} \left| a_m\left( \widehat{x}^m(v, x) \right) - \underline{a}_m\right| \,dx 
= \int_{\widehat{x}^m(v,0)}^{\infty} \left| a_m(y) -  \underline{a}_m \right| \left( \partial_x \widehat{x}^m(v,y)
\right)^{-1} \,dy<\infty
\]
since $a_m$ satisfies condition~\eqref{eq:13}. Together with the above
Lipschitz estimate this yields that~\eqref{eq:13} holds for $a_r(v,x)$
and $\underline{a}_r$. The argument for stabilization at $-\infty$ is
the same.
\end{proof}

The second lemma will allow us to compare a utility function
satisfying condition~\eqref{eq:13} with a sum of two exponential
utilities:
\begin{Lemma}\label{lem:2}
  If a strictly concave, increasing utility function $u \in C^2(\RR)$
  with $\lim_{x\uparrow \infty} u(x)=0$ satisfies
  condition~\eqref{eq:13}, then there are constants $C_1, C_2>0$ such
  that
  \begin{equation}\label{eq:16}
  C_1 \left( - e^{-\underline{a} x} - e^{-\overline{a} x} \right) 
  \leq u(x) 
  \leq C_2 \left( -e^{-\underline{a} x} - e^{-\overline{a} x}\right) 
  , \quad x \in \mathbb{R}
\end{equation}
where $\underline{a}$ and $\overline{a}$ are the same constants as
in~\eqref{eq:13}.
\end{Lemma}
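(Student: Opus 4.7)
The plan is to convert condition~\eqref{eq:13} into pointwise two-sided bounds on the marginal utility $u'$ and then lift these to bounds on $u$ itself by integration from $+\infty$. Writing $a = -(\log u')'$, one has
\[
  u'(x) \;=\; u'(0)\exp\!\left(-\int_0^x a(s)\,ds\right).
\]
For $x \geq 0$ I would decompose the exponent as $\underline{a}\, x + \int_0^x (a(s)-\underline{a})\,ds$ and use the second integral in~\eqref{eq:13} to bound the remainder uniformly in $x \geq 0$; this yields constants $c_1, c_2 > 0$ with $c_1 e^{-\underline{a} x} \leq u'(x) \leq c_2 e^{-\underline{a} x}$ for all $x \geq 0$. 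A symmetric computation based on the first integral in~\eqref{eq:13} produces $c_3, c_4 > 0$ with $c_3 e^{-\overline{a} x} \leq u'(x) \leq c_4 e^{-\overline{a} x}$ for all $x \leq 0$.

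Next, I would introduce the model function $w(x) \set -e^{-\underline{a} x} - e^{-\overline{a} x}$, whose derivative is $w'(x) = \underline{a}\, e^{-\underline{a} x} + \overline{a}\, e^{-\overline{a} x}$. Splitting at $x = 0$ one checks directly that $\underline{a}\, e^{-\underline{a} x} \leq w'(x) \leq (\underline{a}+\overline{a})\, e^{-\underline{a} x}$ for $x \geq 0$ (using $e^{-\overline{a}x} \leq e^{-\underline{a}x}$) and, symmetrically, $\overline{a}\, e^{-\overline{a} x} \leq w'(x) \leq (\underline{a}+\overline{a})\, e^{-\overline{a} x}$ for $x \leq 0$. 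Combining with the bounds on $u'$ from the first step yields constants $0 < k_1 \leq k_2$ such that $k_1 w'(x) \leq u'(x) \leq k_2 w'(x)$ for every $x \in \RR$.

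Finally, the normalization $\lim_{x\uparrow\infty} u(x) = \lim_{x\uparrow\infty} w(x) = 0$ makes the representations $u(x) = -\int_x^\infty u'(s)\,ds$ and $w(x) = -\int_x^\infty w'(s)\,ds$ valid for all $x$. Integrating the pointwise bound $k_1 w'(s) \leq u'(s) \leq k_2 w'(s)$ from $x$ to $\infty$ and then negating gives $k_2 w(x) \leq u(x) \leq k_1 w(x)$; setting $C_1 \set k_2$ and $C_2 \set k_1$ is precisely~\eqref{eq:16}. The whole argument is explicit computation, so there is no genuinely hard step; the only delicate point is careful sign bookkeeping, since $u$ and $w$ are both strictly negative so that the ordering of the constants in~\eqref{eq:16} is reversed with respect to the ordering of $k_1 \leq k_2$.
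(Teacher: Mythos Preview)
Your proof is correct. The key idea---using condition~\eqref{eq:13} to show that $\int_0^x (a(s)-\underline{a})\,ds$ (for $x\geq 0$) and $\int_x^0 (a(s)-\overline{a})\,ds$ (for $x\leq 0$) are uniformly bounded, so that $u'$ behaves like a single exponential on each half-line---is exactly the mechanism the paper exploits as well. The difference is purely in execution: the paper works directly with $u$, using the representation $u(x) = -u'(0)\int_x^\infty \exp(-\int_0^y a(z)\,dz)\,dy$ and L'H\^opital's rule to show that $u(x)/(-e^{-\underline{a}x})$ has a finite positive limit at $+\infty$ (and analogously at $-\infty$), and only afterwards compares with the sum $-e^{-\underline{a}x}-e^{-\overline{a}x}$. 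You instead bound $u'$ first, match it against $w'$ on each half-line, and then integrate once from $x$ to $\infty$. Your route is arguably more elementary since it avoids L'H\^opital and limit computations altogether; the paper's route is slightly shorter because it never needs the separate comparison of $w'$ with single exponentials. Both approaches yield the same constants up to harmless factors.
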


\begin{proof} 
First, we note that
\begin{equation} \label{eq:17} u(x) = - u'(0)
\int_x^{\infty} \exp \left( - \int_0^y a(z) dz \right) dy, \quad x \in
\mathbb{R}\,.
\end{equation}
Hence, by
L'Hopital's rule,
\begin{align*} 
\lim_{x \rightarrow \infty} \frac{u(x)}{-e^{-\underline{a} x}} 
&= \lim_{x \rightarrow \infty} \frac{u'(0) \int_x^{\infty} \exp \left(
- \int_0^y a(z) dz \right) dy}{e^{-\underline{a} x}} \\ 
& = \lim_{x \rightarrow \infty} 
 \frac{u'(0) \exp \left( - \int_0^x a(z) dz \right)}{\underline{a}
e^{-\underline{a} x}} \\ 
& = \frac{u'(0)}{\underline{a}}
\exp \left( \int_0^{\infty} \left( \underline{a} - a(z) \right) dz \right) \in (0,
\infty)
\end{align*} 
due to the integrability condition~\eqref{eq:13}. It follows that
\begin{equation} \label{eq:18} 
-c_2 e^{-\underline{a} x} \leq u(x) \leq -c_1 e^{-\underline{a} x}, \quad x > 0,
\end{equation} 
for positive constants $c_1$ and $c_2$. In conjunction with $0 <
\underline{a} \leq \overline{a}$ this yields
\begin{equation*} 
\limsup_{x \rightarrow \infty} \frac{u(x)}{-e^{-\underline{a} x} - e^{-\overline{a} x}} 
\leq \limsup_{x \rightarrow \infty}\frac{c_2 e^{-\underline{a} x}}{e^{-\underline{a} x} + e^{-\overline{a} x}} < \infty
\end{equation*}
 as well as
\begin{equation*} 
\liminf_{x \rightarrow \infty} \frac{u(x)}{-e^{-\underline{a} x} - e^{-\overline{a} x}} 
\geq \liminf_{x \rightarrow \infty}\frac{c_1 e^{-\underline{a} x}}{e^{-\underline{a} x} + e^{-\overline{a} x}} > 0.
\end{equation*}
This shows that there exists $C_2>0$ such that the right estimate
of~\eqref{eq:16} holds true. The argument for finding $C_1>0$ such
that the left estimate holds is completely analogous.
\end{proof}

The comparison with exponential utilities is also at the heart of the
following technical result:

\begin{Lemma} \label{lem:3} 
  Let $u \in C^2(\RR)$ be a strictly concave, increasing utility
  function with $\lim_{x \uparrow \infty} u(x)=0$ whose risk aversion
  is bounded away from zero and infinity and satisfies
  condition~\eqref{eq:13} for some constants $0<\underline{a} \leq
  \overline{a}<\infty$. Define $f(x) \set x -
  (-x)^{\underline{a}/\overline{a}}$ for $x \leq 0$.

  There exists a positive constant $C > 0$ depending only on the
  utility function $u$ such that
  \begin{align*} 
    \E \left[ u \left( x+\Sigma \right) | \mathcal{F}_t \right] \geq & 
     C f \left(
      \E \left[ u \left( x+\Sigma\right) | \mathcal{F}_{t-1} \right] \frac{\E
        \left[ \exp \left( - \overline{a} \Sigma \right) | \mathcal{F}_t
        \right]}{\E \left[ \exp \left( - \overline{a} \Sigma
          \right) | \mathcal{F}_{t-1} \right]} \right)
  \end{align*}
  for any $x \in \mathbb{R}$, any $t=1,\ldots,T$ and any random
  variable $\Sigma$ with finite exponential moments $\E
  \exp(-\overline{a} \Sigma)+\E \exp(-\underline{a} \Sigma)<\infty$.
\end{Lemma}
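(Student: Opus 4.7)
The plan is to use Lemma~\ref{lem:2} to sandwich $u$ between scaled sums of two exponentials, thereby reducing the problem to manipulating exponential moments of $\Sigma$. Introduce the shorthand $A_s := \E[e^{-\overline{a}\Sigma}|\cF_s]$ and $B_s := \E[e^{-\underline{a}\Sigma}|\cF_s]$ for $s \in \{t-1,t\}$, both of which are a.s.\ finite and strictly positive by the integrability hypothesis on $\Sigma$. Lemma~\ref{lem:2} provides constants $0 < C_2 \leq C_1$ depending only on $u$ such that $C_2(e^{-\underline{a}y}+e^{-\overline{a}y}) \leq -u(y) \leq C_1(e^{-\underline{a}y}+e^{-\overline{a}y})$ for all $y \in \mathbb{R}$. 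Applying this with $y=x+\Sigma$ and taking conditional expectations yields the upper bound
$$
-\E[u(x+\Sigma)|\cF_t] \leq C_1 \bigl(e^{-\underline{a}x}B_t + e^{-\overline{a}x}A_t\bigr)
$$
and, at time $t-1$, the lower bound $-\E[u(x+\Sigma)|\cF_{t-1}] \geq C_2(e^{-\underline{a}x}B_{t-1} + e^{-\overline{a}x}A_{t-1})$.

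Set $\alpha := \underline{a}/\overline{a} \in (0,1]$ and $A := \E[u(x+\Sigma)|\cF_{t-1}]\, A_t/A_{t-1}$, so that $|A| = -A$ and $-Cf(A) = C|A| + C|A|^\alpha$. Multiplying the second inequality above by the positive ratio $A_t/A_{t-1}$ and discarding the non-negative $B_{t-1}$-term yields $|A| \geq C_2\, e^{-\overline{a}x}A_t$. The key observation for controlling $|A|^\alpha$ is that conditional Jensen's inequality applied to the concave map $z \mapsto z^\alpha$ gives $B_t = \E[(e^{-\overline{a}\Sigma})^\alpha|\cF_t] \leq A_t^\alpha$; raising the previous bound to the power $\alpha$ then produces
$$
|A|^\alpha \geq C_2^\alpha\, e^{-\alpha\overline{a}x} A_t^\alpha = C_2^\alpha\, e^{-\underline{a}x} A_t^\alpha \geq C_2^\alpha\, e^{-\underline{a}x} B_t.
$$

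The two terms on the right-hand side of the upper bound on $-\E[u(x+\Sigma)|\cF_t]$ can now be matched termwise: the $e^{-\overline{a}x}A_t$ piece is dominated by $(C_1/C_2)|A|$, and the $e^{-\underline{a}x}B_t$ piece by $(C_1/C_2^\alpha)|A|^\alpha$. Choosing $C := \max(C_1/C_2,\, C_1/C_2^\alpha)$, which depends only on $u$, therefore gives $-\E[u(x+\Sigma)|\cF_t] \leq C(|A|+|A|^\alpha) = -C f(A)$, as required. I expect the main obstacle to be identifying the correct pairing between Lemma~\ref{lem:2}'s two-exponential sandwich and the two pieces of $f$; once one recognises that $\alpha = \underline{a}/\overline{a}$ is precisely the exponent for which Jensen converts $B_t$ into a power of $A_t$ with matching $e^{-\underline{a}x}$-scaling, the remaining estimates amount to routine bookkeeping.
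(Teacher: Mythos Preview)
Your proof is correct and follows essentially the same strategy as the paper: sandwich $u$ via Lemma~\ref{lem:2}, then use conditional Jensen for $z\mapsto z^{\underline{a}/\overline{a}}$ to convert $\E[e^{-\underline{a}\Sigma}\mid\cF_t]$ into a power of $\E[e^{-\overline{a}\Sigma}\mid\cF_t]$. The only difference is presentational: the paper introduces an auxiliary $\cF_{t-1}$-measurable $\xi$ defined by $\E[u(x+\Sigma)\mid\cF_{t-1}]=-C_2 e^{-\overline{a}\xi}A_{t-1}$ and uses $\xi\leq x$ before invoking the lower bound of Lemma~\ref{lem:2} at $\xi$, whereas you obtain the equivalent estimate $|A|\geq C_2 e^{-\overline{a}x}A_t$ directly by dropping the $B_{t-1}$-term---a slightly more streamlined bookkeeping of the same inequality.
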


\begin{proof}
  Let $t = 1, 2, \ldots, T$ be arbitrary. For given $\Sigma$ with the
  above finite exponential moments and for any fixed $x \in
  \mathbb{R}$, the growth estimate~\eqref{eq:16} of Lemma~\ref{lem:2}
  allows us to define the $\mathcal{F}_{t-1}$-measurable random
  variable $\xi$ by
\begin{equation} \label{eq:19}
\E \left[ u \left( x+\Sigma \right) | \mathcal{F}_{t-1} \right] = C_2 \E \left[ - \exp \left( - \overline{a} \left( \xi+\Sigma \right) \right) | \mathcal{F}_{t-1} \right],
\end{equation}
where $C_2>0$ is any constant such that~\eqref{eq:16} holds. By the
same estimate we obtain 
\begin{align*}
C_2 \E \left[ - \exp \left( - \overline{a} \left( \xi+ \Sigma \right) \right) | \mathcal{F}_{t-1} \right] & = \E \left[ u \left( x+\Sigma \right) | \mathcal{F}_{t-1} \right] \\
& \leq C_2 \E \left[ - \exp \left( - \overline{a} \left(x+ \Sigma\right) \right) | \mathcal{F}_{t-1} \right],
\end{align*}
which implies that $\xi \leq x$. We use this in the other part of
estimate~\eqref{eq:16} to deduce
\begin{align*}
& \E \left[ u \left( x+\Sigma \right) | \mathcal{F}_t \right] \\
& \geq C_1 \E \left[ - \exp \left( - \overline{a} \left(x+ \Sigma \right) \right) - \exp \left( - \underline{a} \left(x+ \Sigma \right) \right) | \mathcal{F}_t \right] \\
& \geq  C_1 \E \left[ - \exp \left( - \overline{a} \left(\xi+ \Sigma\right) \right) - \exp \left( - \underline{a} \left(\xi+ \Sigma \right) \right) | \mathcal{F}_t \right] \\
& = C_1 \cbr{ - \exp \left( - \overline{a} \xi \right) \E \left[ \exp \left( - \overline{a} \Sigma \right) | \mathcal{F}_t \right] - \exp \left( - \underline{a} \xi \right) \E \left[ \exp \left( - \underline{a} \Sigma \right) | \mathcal{F}_t \right] } 
\end{align*}
By definition of $\xi$ we can write
\begin{equation*}
- \exp\left( -\overline{a} \xi \right) = \frac{1}{C_2} \frac{\E \left[ u \left(x+ \Sigma \right) | \mathcal{F}_{t-1} \right]}{\E \left[ \exp \left( -\overline{a} \Sigma \right) | \mathcal{F}_{t-1} \right]}.
\end{equation*}
This yields
\begin{align*}
\E \left[ u \left( x+\Sigma \right) | \mathcal{F}_t \right] 
\geq & \frac{C_1}{C_2} \E \left[ u \left(x+ \Sigma \right)| \mathcal{F}_{t-1} \right]
        \frac{\E \left[ \exp \left( - \overline{a} \Sigma \right) | \mathcal{F}_t \right]}
                {\E \left[ \exp \left( - \overline{a} \Sigma \right) | \mathcal{F}_{t-1} \right]} \\
& - C_1\left(
       -\frac{1}{C_2} \frac{\E \left[ u \left( x+\Sigma \right) | \mathcal{F}_{t-1} \right]}{\E \left[ \exp \left( -\overline{a} \Sigma \right) | \mathcal{F}_{t-1} \right]}\right)^{\underline{a}/\overline{a}} \E \left[ \exp \left( - \overline{a} \Sigma \right)^{\underline{a}/\overline{a}}  | \mathcal{F}_t \right] 
\end{align*}
and after applying Jensen's inequality with the concave function $x
\mapsto x^{\underline{a}/\overline{a}}$ we see that there is a
constant $C>0$ such that
\begin{align*}
\E \left[ u \left(x+\Sigma \right) | \mathcal{F}_t \right]
 \geq &
 C \Bigg\{  \E  \left[ u \left(x+ \Sigma \right) | \mathcal{F}_{t-1} \right] 
              \frac{\E \left[ \exp \left( - \overline{a} \Sigma \right) | \mathcal{F}_t
                \right]}
              {\E \left[ \exp \left( - \overline{a} \Sigma \right)| \mathcal{F}_{t-1} \right]}  \\
& \quad-  \left(- \E \left[ u \left( x+\Sigma \right)| \mathcal{F}_{t-1} \right] \frac{\E \left[ \exp \left( - \overline{a} \Sigma \right) | \mathcal{F}_t\right]}{\E \left[ \exp \left( - \overline{a} \Sigma \right) | \mathcal{F}_{t-1} \right]} \right)^{\underline{a}/\overline{a}} \Bigg\} \\
= & C f \left( \E \left[ u \left( x+\Sigma \right) | \mathcal{F}_{t-1} \right] \frac{\E \left[ \exp \left( - \overline{a} \Sigma \right) | \mathcal{F}_t\right]}{\E \left[ \exp \left( - \overline{a} \Sigma \right) | \mathcal{F}_{t-1} \right]} \right)
\end{align*}
where $f$ is as defined in the assertion of our lemma.
\end{proof}

We are finally in a position to give the

\paragraph{Proof of Theorem~\ref{thm:1}.} 
Let $t \in \cbr{1,\ldots,T}$ and $Q^n$, $u^n$, $n=1,2,\ldots$ be
as in Definition~\ref{def:1}. For notational simplicity let us denote
\[
U^n_s \set U^{t-1,u^n,Q^n}_s, \; X^n_s \set X^{t-1,u^n,Q^n}_s, \; V^n_s
\set V^{t-1,u^n,Q^n}_s, \quad s=t,\ldots,T\,.
\]
We first recall that by Theorem~4.2 in \cite{bk2013} $G_{t-1}$
is contained in $\widetilde{G}^2(c)$, a class of saddle functions
introduced there. Property~(G7) of these special saddle functions
amounts in our context to
\begin{equation}
\label{eq:20}
\frac{1}{c} \leq -u^{m,n} V^{m,n}_t \leq c
\end{equation}
where $c$ is the bound on the market makers' risk aversions occurring
in~\eqref{eq:1}. Thus, since by assumption $(-u^{m,n})_{n=1,2,\ldots}$ is bounded away from
zero and $\infty$, so are the weights $V^{m,n}_t$, $n=1,2,\ldots$ for any
$m=1,\ldots,M$. As a consequence $\underline{V}^m_t \set \inf_{n=1,2,\ldots}
V^{m,n}_t>0$ and $\overline{V}^m_t \set \sup_{n=1,2,\ldots}
V^{m,n}_t<\infty$ yield finite $\cF_{t-1}$-measurable bounds on the
initial weight of each market maker $m=1,\ldots,M$.

Let us next argue that
\begin{align}
  0 \geq &\abr{\underline{V}_t,U^n_t} \geq \abr{V^n_t,U^n_t}=
  \condexp{r(V^n_t,\Sigma(X^n_t,Q^n_t))} \label{eq:21}\\
  \geq &
  \condexp{r(\overline{V}_t,\Sigma(X^n_t,Q^n_t))} \label{eq:22}\\
  \geq & C_{\overline{V}_t}
  f\Bigg(\condexp[\cF_{t-1}]{r(\overline{V}_t,\Sigma(X^n_t,Q^n_t))}\label{eq:23}\\\nonumber
  &\qquad\qquad \cdot
  \frac{\condexp{\exp(-\overline{a}\Sigma(0,Q^n_t))}}{\condexp[\cF_{t-1}]{\exp(-\overline{a}\Sigma(0,Q^n_t))}}\Bigg)
\end{align}
for some random variable $C_{\overline{V}_t}>0$. Indeed, the estimates
in~\eqref{eq:21} are immediate from $0<\underline{V}_t^m \leq
V^{m,n}_t$ and $U^{m,n}_t <0$, $m=1,\dots,M$. The identity
in~\eqref{eq:21} holds because, by~\eqref{eq:6}, $U^n_t$ is the vector
of $\cF_t$-conditional expected utilities that our market makers
obtain when at time $t-1$ the Pareto allocation of
$\Sigma(X^n_t,Q^n_t)$ is formed given the weights
$V^n_t$. Estimate~\eqref{eq:22} follows because the representative
agent's utility function $r(v,x)$ is decreasing in the weights
$v$. Finally, \eqref{eq:23} follows because our Lemmas~\ref{lem:1}
and~\ref{lem:2} above allow us to apply Lemma~\ref{lem:3} to the
(random) utility function $u\set r(\overline{V}_t,.)$ which provides
us with the required random variable $C_{\overline{V}_t}>0$.

Recalling that $\Sigma_0 = \widetilde{\Sigma}_0+\abr{q_0,\psi}$ with
$q_0 \in \RR^J$ and bounded $\widetilde{\Sigma}_0$, we have the
following estimate for the
ratio in~\eqref{eq:23}:
\[
\frac{\condexp{\exp(-\overline{a}\Sigma(0,Q^n_t))}}{\condexp[\cF_{t-1}]{\exp(-\overline{a}\Sigma(0,Q^n_t))}}
\leq e^{2\overline{a}\|\widetilde{\Sigma}_0\|_{\infty}} \frac{\condexp{\exp(-\abr{\overline{a}(q_0+Q^n_t),\psi})}}
{\condexp[\cF_{t-1}]{\exp(-\abr{\overline{a}(q_0+Q^n_t),\psi})}}\,.
\]
By our assumption of decreasing exponential tails the latter ratio
and, thus, also the former ratio converge to zero on $D_t \set
\cbr{\nu_{t-1} \prec \nu_t} \cap \cbr{\lim_n |Q^n_t|=\infty}$. This is
a set with positive probability
\begin{align*}
 \mathbb{P} \left[ D_t \right] = \E \left[ \mathbf{1}_{\cbr{\lim_n |Q^n_t|=\infty}} \condprob[\cF_{t-1}]{\nu_{t-1} \prec \nu_t} \right]\,,
\end{align*}
which is strictly positive by \eqref{eq:12}.
We shall argue below that
\begin{equation}\label{eq:22b}
0 \geq \condexp[\cF_{t-1}]{r(\overline{V}_t,\Sigma(X^n_t,Q^n_t))} \geq
\abr{\overline{V}_t,u^n}, \; n=1,2,\ldots,
\end{equation}
and so the first conditional expectation in~\eqref{eq:23} is bounded
in $n$ by assumption on $(u^n)_{n=1,2,\ldots}$.  As a consequence, the
right side in~\eqref{eq:23} converges to $f(0)=0$ on $D_t$ when $n
\uparrow \infty$. This implies $\abr{\underline{V}_t,U^n_t} \to 0$ on
this set. Since $\underline{V}_t^m>0$, $m=1,\ldots,M$, this yields
that actually $U^n_t \to 0$ on $D_t$. Because $U^n$ is a martingale,
we have $U^n_t = \condexp{U^n_T}$ and so $U^n_T \to 0$ in probability
on $D_t$. This, however, is equivalent to $X^n_T+\abr{Q^n_T,\psi} =
\sum_{m=1}^M u_m^{-1}(U^{m,n}_T)-\Sigma_0 \to \infty$ in probability
on $D_t$ which establishes the asserted efficient friction.

It remains to verify~\eqref{eq:22b}. For this, note that the Pareto
allocation $\alpha^n_t$ of $\Sigma(X_t^n,Q^n_t)$ with weights $V^n_t$
gives us
\[
r(\overline{V}_t,\Sigma(X_t^n,Q^n_t)) \geq \sum_{m=1}^M \overline{V}^m_t
u_m(\alpha^{m,n}_t), \; n=1,2,\ldots\,.
\]
Our model's utility indifference principle~\eqref{eq:3} yields
\[
\condexp[\cF_{t-1}]{u_m(\alpha^{m,n}_t)} = U^{m,n}_{t-1}=u^{m,n}, \;
m=1,\ldots,M, \; n=1,2,\ldots,
\]
and so~\eqref{eq:22b} follows by taking an $\cF_{t-1}$-conditional
expectation in the preceding estimate and recalling that
$\overline{V}_t$ is $\cF_{t-1}$-measurable.\qed

\section{Ramifications}

In this section we collect a few supplementary results and
illustrations.  We first illustrate in
Section~\ref{sec:counterexample} that our tail condition on $\psi$,
\eqref{eq:12}, is necessary for the existence of optimal
superreplicating strategies even in a simple binomial model with two
periods and one exponential market maker.  We furthermore show in
Section~\ref{sec:stringentassumption} that under more stringent
assumptions on $\psi$, efficient friction holds even without the
requirement on the market makers' risk
aversions. Section~\ref{sec:binomial} is finally concerned with the
special case of a general multi-period binomial model.

\subsection{A binomial model where the superreplication price is
  not attained}\label{sec:counterexample}

We consider a two-period model with one asset and one market maker,
where $\Omega \set \cbr{-1,+1}^2$. For $\omega = (y_1, y_2) \in \Omega$,
let $Y_t(\omega) \set y_t$ be the projection of $\omega$ to its $t$-th
component, $t = 1,2$. The filtration $(\mathcal{F}_t)_{t=0,1,2}$ is
generated by $Y = (Y_t)_{t=1,2}$. The distribution of $Y$ is determined by
\begin{equation*}
\P[ Y_1 = +1]  \set p_1, \quad 
\P[ Y_2 = +1 \,|\, Y_1 = +1] \set p_2, \quad 
\P[ Y_2 = +1\, |\, Y_1 = -1] \set p_3,
\end{equation*}
where $p_1, p_2, p_3 \in (0, 1)$ with $p_2 \neq p_3$. 
Moreover, the single market maker's utility function is given by $u(x)
\set -e^{-\alpha x}$ for $\alpha > 0$, the initial endowment is $\Sigma_0 \set 0$ and $\psi$
is determined by
\begin{equation*}
\psi \left( Y_1, +1 \right) \set \psi^u , \quad \psi\left( Y_1, -1 \right) \set \psi^d \text{ with } \psi^u > \psi^d.
\end{equation*}

The specific form of exponential utility and predictability of the cash balance $X^Q$ of the strategy $Q$ yield that
\begin{equation*}
-1 = u(0) = \E \left[ u\left( X_1^Q + Q_1 \psi \right) \right] = -\E \left[ e^{-\alpha(X_1^Q + Q_1 \psi)} \right] = - e^{-\alpha X_1^Q} \E \left[ e^{- \alpha Q_1 \psi} \right].
\end{equation*}
Thus
\begin{equation*}
X_1^Q = \frac{1}{\alpha} \log \left( \E \left[ e^{- \alpha Q_1 \psi} \right] \right)
\end{equation*}
and the utility level $U_1^Q$ at time $1$ of the strategy $Q$ satisfies
\begin{equation*}
U_1^Q = -\frac{\E \left[ e^{-\alpha Q_1\psi} |Ê\mathcal{F}_1 \right]}{\E \left[ e^{- \alpha Q_1\psi} \right]}.
\end{equation*}
By direct calculation, we find that 
\begin{equation} \label{e.u1q}
U_1^Q = - \frac{A e^{- \alpha Q_1 (\psi^u -\psi^d)} + (1-A)}{B e^{-\alpha Q_1 (\psi^u - \psi^d)} + (1-B)}  \text{
  with } A \set\left\{ \begin{array}{cc} p_2 & \text{ if } Y_1 = +1 \\
p_3 & \text{ if } Y_1 = -1 \end{array} \right. 
\end{equation}
and $B \set p_1p_2 + (1-p_1)p_3$ in our model.

We consider taking ever larger positions at the first time
step and liquidating our positions in the second time step. So let
$(Q_1^n)_{n \in \mathbb{N}}$ be a sequence such that $Q_1^n
\rightarrow \infty$ as $n \rightarrow \infty$. Then
\begin{equation*}
U_1^n \set U_1^{Q^n} \rightarrow U_1 = -\frac{1-A}{1-B} 
\text{ as }  n \rightarrow \infty.
\end{equation*}
Since we liquidate our position after the first period, $Q_2^n = 0$
for every $n =1,2,\ldots$. This amounts to $U_1^n = - e^{-\alpha X_2^n}$ and thus
\begin{equation*}
\lim_n X^n_2 = X_2 \set \frac{1}{\alpha} \log \left( -
  \frac{1}{U_1} \right) =  \frac{1}{\alpha} \log \left( \frac{1-B}{1-A} \right),
\end{equation*}
because of predictability of $X$. 

Next we shall show that the superreplicating cost of the claim $H \set
-X_2 \in \mathcal{F}_1$ is $\pi^H = 0$ and that there exists no
strategy $Q$ that superreplicates with initial capital $\pi^H =
0$. Let us first show $\pi^H \leq 0$. Since $-X_2^n \rightarrow H$,
there exists a sequence of real numbers $\epsilon_n \downarrow 0$ as
$n \uparrow \infty$ and
\begin{equation*}
\epsilon_n - X_2^n \geq H, \; n=1,2,\ldots\,.
\end{equation*}
Moreover, the strategy $\left( Q_t^n \right)_{t = 1,2} = (Q_1^n, 0)$
yields the cash balance $X_2^n$ for all $n = 1.2.\ldots$. Therefore,
$\epsilon_n \geq \pi^H$ for every such $n$ and by sending $\epsilon_n
\downarrow 0$, we obtain that $\pi^H \leq 0$.  Thus if $\pi^H$ is not
zero, then there exists $\epsilon > 0$ and a superreplicating strategy
$\widetilde{Q}$ with cash balance $\widetilde{X}$ such that
\begin{equation*}
- \epsilon - \widetilde{X}_2 - \widetilde{Q}_2\psi \geq H\,.
\end{equation*}
Hence,
\begin{equation*} 
\widetilde{X}_2 + \widetilde{Q}_2\psi < \epsilon + \widetilde{X}_2 + \widetilde{Q}_2\psi \leq -H\,. 
\end{equation*}
Moreover, by construction of $H$, we have $\E \left[u(-H)\right] =
u(0)$. These two observations lead to the contradiction
\begin{equation*}
u(0) = \E \left[ u \left( \widetilde{X}_2 + \widetilde{Q}_2\psi \right) \right] < \E\left[ u(-H) \right] = u(0).
\end{equation*}

It remains to show that $\pi^H=0$ is not attained by any strategy
$\widetilde{Q}$. If $\widetilde{Q}$ is a strategy that superreplicates with
initial capital $\pi^H = 0$ and cash balance $\widetilde{X}$, i.e.
\begin{equation*}
- \widetilde{X}_2 - \widetilde{Q}_2\psi \geq H \,,
\end{equation*}
then 
\begin{equation*}
\E \left[ u \left( \widetilde{X}_2 + \widetilde{Q}_2\psi \right)  \right] = u(0) = \E \left[ u(-H )\right] 
\end{equation*}
yields that $ \widetilde{X}_2 + \widetilde{Q}_2\psi = - H = X_2$. Then since $X_2$ is $\mathcal{F}_1$ measurable,
\begin{equation*}
\E \left[ u \left( \widetilde{X}_2 + \widetilde{Q}_2\psi \right) | \mathcal{F}_1 \right] = \E \left[ u(X_2) | \mathcal{F}_1 \right] = u(X_2).
\end{equation*}
So $u(X_2)$ is the utility level at time $1$ of the strategy
$\widetilde{Q}$, i.e. $U_1^{\widetilde{Q}} = u(X_2)$. However, by
differentiating equation \eqref{e.u1q} we obtain
\begin{equation*}
\frac{\partial}{\partial Q_1} U_1^{Q} = \alpha \left( \psi^u - \psi^d \right) e^{-\alpha Q_1 \left( \psi^u - \psi^d \right)} \frac{A-B}{\left(Be^{-\alpha Q_1 \left( \psi^u - \psi^d \right)} + (1-B) \right)^2},
\end{equation*}
where $A$ and $B$ are given as in \eqref{e.u1q}. Without loss of generality assume 
that $p_2 > p_3$. Then, $U_1^Q$ is
strictly increasing in $Q_1$ on the set $\cbr{ Y_1 = +1 }$ and strictly
decreasing on $\cbr{ Y_1 = -1}$. Since the utility level $u(X_2)$ is the
limiting value of positions $Q_1^n$ tending to $+\infty$ and $U_1^Q$
is strictly monotone, there cannot exist a finite $\widetilde{Q}_1$ with
utility level $u(X_2)$, a contradiction.

\subsection{Efficient friction when extremal payoffs may change any
  time} \label{sec:stringentassumption}

In this section we will consider the special case where only one
security is marketed and so $\psi$ is a real-valued random variable.
Let us denote by $\underline{\psi}_{t}$ its $\mathcal{F}_t$-measurable
essential infimum of $\psi$ and by $\overline{\psi}^{t}$ its
$\mathcal{F}_t$-measurable essential supremum, i.e.
\begin{align*}
\underline{\psi}_{t}  \set \esssup \cbr{ \zeta \in \mathcal{F}_t :
  \zeta \leq \psi } \text{ and }
\overline{\psi}^{t} \set \essinf \cbr{ \zeta \in \mathcal{F}_t : \zeta \geq \psi }.
\end{align*}

\begin{Proposition} \label{prop:1}
A single security $\psi \in L^0(\RR)$ exhibits decreasing exponential tails in the sense
of condition~\eqref{eq:12} if its conditional infima and suprema are potentially
strictly monotone, i.e., if
\begin{equation} \label{eq:24}
\condprob[\cF_{t-1}]{ \underline{\psi}_{t-1} < \underline{\psi}_{t}, \, \overline{\psi}^{t-1} > \overline{\psi}^{t}    } >0, \quad t = 1, \ldots, T.
\end{equation}
Moreover, if $|\Omega|<\infty$ or, more generally, if
\begin{equation}\label{eq:25} 
\condprob{
  \psi = \underline{\psi}_{t} } > 0 \text{ and } \condprob{\psi =
  \overline{\psi}^{t} } > 0, \;t = 0, \ldots, T,
\end{equation} 
then conditions~\eqref{eq:12} and~\eqref{eq:24} are even equivalent.
\end{Proposition}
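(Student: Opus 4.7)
The plan is to prove both implications pathwise by analyzing the conditional moment generating functions $M_t(q)\set\condexp{e^{q\psi}}$ and $M_{t-1}(q)\set\condexp[\cF_{t-1}]{e^{q\psi}}$ as $|q|\to\infty$, since the relation $\nu_t\prec\nu_{t-1}$ is, by~\eqref{eq:10}, exactly the requirement that $M_t(q)/M_{t-1}(q)\to 0$ in that limit.

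For the forward implication~\eqref{eq:24}$\Rightarrow$\eqref{eq:12}, the upper estimates $M_t(q)\le e^{q\overline{\psi}^{t}}$ for $q>0$ and $M_t(q)\le e^{q\underline{\psi}_{t}}$ for $q<0$ follow immediately from $\underline{\psi}_{t}\le\psi\le\overline{\psi}^{t}$ almost surely. The key complementary observation is that, for every $\epsilon>0$,
\begin{equation*}
\condprob[\cF_{t-1}]{\psi>\overline{\psi}^{t-1}-\epsilon}>0\quad\text{P-a.s.,}
\end{equation*}
because otherwise, on a set $A\in\cF_{t-1}$ of positive measure, the random variable $\overline{\psi}^{t-1}-\epsilon\cdot 1_A$ would be a strictly smaller $\cF_{t-1}$-measurable dominator of $\psi$, contradicting the minimality in the definition of $\overline{\psi}^{t-1}$; the symmetric statement near $\underline{\psi}_{t-1}$ follows by the same argument. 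This yields the lower bound $M_{t-1}(q)\ge e^{q(\overline{\psi}^{t-1}-\epsilon)}\condprob[\cF_{t-1}]{\psi>\overline{\psi}^{t-1}-\epsilon}$ for $q>0$ (analogously for $q<0$). On the event $D\set\cbr{\overline{\psi}^{t-1}>\overline{\psi}^{t},\,\underline{\psi}_{t-1}<\underline{\psi}_{t}}$ one chooses $\epsilon(\omega)$ strictly smaller than both positive gaps $\overline{\psi}^{t-1}-\overline{\psi}^{t}$ and $\underline{\psi}_{t}-\underline{\psi}_{t-1}$; the ratio $M_t(q)/M_{t-1}(q)$ is then bounded by an exponential decaying like $e^{-c(\omega)|q|}$. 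Hence $D\subseteq\cbr{\nu_t\prec\nu_{t-1}}$ up to $\P$-null sets, and~\eqref{eq:24} upgrades to~\eqref{eq:12} by monotonicity of $\condprob[\cF_{t-1}]{\cdot}$.

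For the converse under~\eqref{eq:25}, I would sharpen these bounds into precise asymptotics via dominated convergence: since $q(\psi-\overline{\psi}^{t})\le 0$ for $q>0$,
\begin{equation*}
e^{-q\overline{\psi}^{t}}M_t(q)=\condexp{e^{q(\psi-\overline{\psi}^{t})}}\xrightarrow[q\to+\infty]{}\condprob{\psi=\overline{\psi}^{t}},
\end{equation*}
and analogously at time $t-1$; assumption~\eqref{eq:25} makes both limits strictly positive P-a.s., so
\begin{equation*}
\frac{M_t(q)}{M_{t-1}(q)}\sim\frac{\condprob{\psi=\overline{\psi}^{t}}}{\condprob[\cF_{t-1}]{\psi=\overline{\psi}^{t-1}}}\,e^{q(\overline{\psi}^{t}-\overline{\psi}^{t-1})}\quad(q\to+\infty).
\end{equation*}
Because $\overline{\psi}^{t}\le\overline{\psi}^{t-1}$ holds always, this vanishes at $+\infty$ precisely on $\cbr{\overline{\psi}^{t-1}>\overline{\psi}^{t}}$; the symmetric $q\to-\infty$ argument pins down $\cbr{\underline{\psi}_{t-1}<\underline{\psi}_{t}}$. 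Thus $\cbr{\nu_t\prec\nu_{t-1}}=D$ modulo null sets, and~\eqref{eq:12} is exactly~\eqref{eq:24} dressed in $\condprob[\cF_{t-1}]{\cdot}$. The only mildly delicate point is the $\omega$-dependence of $\epsilon$ in the forward direction, but this is cosmetic: $\prec$ is checked pointwise on the regular conditional distributions $\nu_t(\omega,\cdot)$, $\nu_{t-1}(\omega,\cdot)$, so one simply fixes $\omega\in D$, picks $\epsilon(\omega)$ small enough, and verifies~\eqref{eq:10} for that single pair of distributions. The real content of the proof is the essential-supremum lemma displayed above, which is what converts the qualitative monotonicity hypothesis~\eqref{eq:24} into a quantitative exponential tail bound on the ratio $M_t(q)/M_{t-1}(q)$.
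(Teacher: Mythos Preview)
Your proof is correct and follows essentially the same route as the paper's: for the forward implication you combine the trivial upper bound $M_t(q)\le e^{q\overline{\psi}^t}$ with the lower bound $M_{t-1}(q)\ge e^{q(\overline{\psi}^{t-1}-\epsilon)}\condprob[\cF_{t-1}]{\psi>\overline{\psi}^{t-1}-\epsilon}$, and for the converse under~\eqref{eq:25} you extract the exact asymptotics of $M_t(q)/M_{t-1}(q)$ by dominated convergence. The only cosmetic differences are that the paper avoids the $\omega$-dependent $\epsilon$ by writing $D=\bigcup_{\epsilon>0}\{\underline{\psi}_{t-1}+\epsilon<\underline{\psi}_t,\ \overline{\psi}^{t-1}-\epsilon>\overline{\psi}^t\}$ and handling each fixed-$\epsilon$ slice, and in the second part it centers both numerator and denominator at $\overline{\psi}^{t-1}$ (obtaining $\lim_{q\to\infty}M_t(q)/M_{t-1}(q)=\condprob{\psi=\overline{\psi}^{t-1}}/\condprob[\cF_{t-1}]{\psi=\overline{\psi}^{t-1}}$) rather than at $\overline{\psi}^t$ and $\overline{\psi}^{t-1}$ separately as you do; both choices yield the same identification $\{\nu_t\prec\nu_{t-1}\}=D$.
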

\begin{proof}
Since
\begin{equation*}
A \set\cbr{ \underline{\psi}_{t-1} < \underline{\psi}_{t}, \ \overline{\psi}^{t-1} > \overline{\psi}^{t} }  = \bigcup_{\epsilon > 0} \cbr{ \underline{\psi}_{t-1} + \epsilon < \underline{\psi}_{t}, \ \overline{\psi}^{t-1} - \epsilon > \overline{\psi}^{t} } 
\end{equation*}
the implication `\eqref{eq:24} $\Longrightarrow$ \eqref{eq:12}'
will follow if we show
\begin{equation*}
\cbr{ \underline{\psi}_{t-1} + \epsilon < \underline{\psi}_{t}, \ \overline{\psi}^{t-1} - \epsilon > \overline{\psi}^{t} } \subseteq \cbr{ \lim_{|q| \rightarrow \infty} \frac{\E \left[ e^{q \psi} | \cF_t\right]}{\E \left[ e^{q \psi} | \cF_{t-1}\right]} = 0 }
\end{equation*}
 for any $\epsilon > 0$. To this end, note that because $\overline{\psi}^{t-1}$ is $\mathcal{F}_{t-1}$-measurable, we obtain
the following estimate for any $q>0$ :
\begin{align*}
\E \left[ e^{q \psi} |Ê\cF_{t-1} \right] & \geq \E \left[ e^{q \psi} \mathbf{1}_{\cbr{ \psi > \overline{\psi}^{t-1} - \epsilon }} \big| \cF_{t-1} \right] \\
& \geq \exp \left( q \left( \overline{\psi}^{t-1} - \epsilon \right) \right) \P \left( \psi > \overline{\psi}^{t-1} - \epsilon | \mathcal{F}_{t-1} \right).
\end{align*}
Now $\P \left( \psi > \overline{\psi}^{t-1} - \epsilon |
  \mathcal{F}_{t-1} \right) > 0$ by definition of the
conditional essential supremum $\overline{\psi}^t$. 

So
\begin{equation*}
  0 \leq \lim_{q \rightarrow \infty} \frac{\E \left[ e^{q \psi} | \cF_t \right]}{\E \left[ e^{q \psi} | \cF_{t-1} \right]} \leq \lim_{q \rightarrow \infty} \frac{ \exp \left( q \left(\overline{\psi}^{t} - \overline{\psi}^{t-1} + \epsilon \right) \right)}{\P \left( \psi > \overline{\psi}^{t-1} - \epsilon | \mathcal{F}_{t-1} \right)} = 0
\end{equation*}
on the set $\cbr{ \underline{\psi}_{t-1} + \epsilon < \underline{\psi}_{t}, \ \overline{\psi}^{t-1} - \epsilon > \overline{\psi}^{t} }$.\\

Similarly for $q <0$, we use the estimate
\begin{align*}
\E \left[ e^{q \psi} | \cF_{t-1} \right] & \geq \E \left[ e^{q \psi} \mathbf{1}_{\cbr{ \psi < \underline{\psi}_{t-1} + \epsilon }} | \cF_{t-1} \right] \\
& \geq \exp \left( q \left( \underline{\psi}_{t-1} + \epsilon \right) \right) \P \left( \psi < \underline{\psi}_{t-1} + \epsilon | \mathcal{F}_{t-1} \right)
\end{align*}
to deduce that 
\begin{equation*}
0 \leq \lim_{q \rightarrow -\infty} \frac{\E \left[ e^{q \psi} |Ê\cF_t \right]}{\E \left[ e^{q \psi} |Ê\cF_{t-1} \right]} \leq \lim_{q \rightarrow -\infty} \frac{ \exp \left( q \left( \underline{\psi}_{t} - \underline{\psi}_{t-1} - \epsilon \right) \right)}{\P \left( \psi < \underline{\psi}_{t-1} + \epsilon | \mathcal{F}_{t-1} \right)} = 0
\end{equation*}
on the set $\cbr{ \underline{\psi}_{t-1} + \epsilon < \underline{\psi}_{t}, \ \overline{\psi}^{t-1} - \epsilon > \overline{\psi}^{t} }$, because $\P \left( \psi < \underline{\psi}_{t-1} + \epsilon | \mathcal{F}_{t-1} \right) > 0$. This establishes the first claim.\\

For the second claim, we note that given~\eqref{eq:25} holds we
obtain by dominated convergence
\begin{align*}
\lim_{q \downarrow -\infty} &\frac{\condexp[\cF_{t}]{ e^{q \psi}  }}{\condexp[\cF_{t-1}]{ e^{q \psi} } } = \lim_{q \downarrow - \infty} \frac{\condexp{ e^{q \psi} \left( \mathbf{1}_{\cbr{ \psi = \underline{\psi}_{t-1} }} + \mathbf{1}_{\cbr{ \psi > \underline{\psi}_{t-1} }}  \right) }}{\condexp[\cF_{t-1}]{ e^{q \psi} \left( \mathbf{1}_{\cbr{ \psi = \underline{\psi}_{t-1} }} + \mathbf{1}_{\cbr{ \psi > \underline{\psi}_{t-1} }}  \right)}}\\
& = \lim_{q \downarrow - \infty} \frac{e^{q \underline{\psi}_{t-1}} \cbr{ \condprob[\cF_{t}]{\psi = \underline{\psi}_{t-1} } + \condexp[\cF_{t}]{ e^{q (\psi - \underline{\psi}_{t-1})} \mathbf{1}_{\cbr{\psi > \underline{\psi}_{t-1} }} } }}{e^{q \underline{\psi}_{t-1}} \cbr{ \condprob[\cF_{t-1}]{\psi = \underline{\psi}_{t-1} } + \condexp[\cF_{t-1}]{ e^{q (\psi - \underline{\psi}_{t-1})} \mathbf{1}_{\cbr{\psi > \underline{\psi}_{t-1} }} } }} \\
& = \frac{\condprob[\cF_{t}]{ \psi = \underline{\psi}_{t-1}}}{\condprob[\cF_{t-1}]{ \psi = \underline{\psi}_{t-1}}} 
\end{align*}
and, similarly,
\[
\lim_{q \uparrow +\infty} \frac{\condexp[\cF_{t}]{ e^{q \psi}  }}{\condexp[\cF_{t-1}]{ e^{q \psi} } } = \frac{\condprob[\cF_{t}]{ \psi = \overline{\psi}^{t-1}}}{\condprob[\cF_{t-1}]{ \psi = \overline{\psi}^{t-1}}} \,.
\]
So, if the conditional supremum and infimum both change from time $t-1$
to $t$ with positive probability, the above ratios of conditional
probabilities are zero along the limits as $q\downarrow -\infty$
and $q\uparrow +\infty$. This was to be shown.
\end{proof}

Our next result illustrates that our assumption of stabilizing
asymptotic risk aversions formulated in Theorem~\ref{thm:1} is not
needed for efficient friction to hold when essential suprema and
infima are strictly monotone:

\begin{Theorem}\label{thm:2}
Under Assumptions~\ref{asp:1} and~\ref{asp:2}, a model with a single marketed security
$\psi$ satisfying condition~\eqref{eq:24} exhibits efficient friction.
\end{Theorem}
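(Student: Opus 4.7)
The plan is to follow the chain of inequalities \eqref{eq:21}--\eqref{eq:23} from the proof of Theorem~\ref{thm:1}, but to replace the appeal to Lemma~\ref{lem:3} (which relied crucially on the stabilizing risk aversions~\eqref{eq:13}) by a direct argument exploiting the strict shrinkage of the conditional range of $\psi$ that~\eqref{eq:24} affords. Let $t$, $u^n$, and $Q^n$ be as in Definition~\ref{def:1}. Since $\{\lim_n|Q_t^n|=\infty\}$ has positive probability, after passing to a subsequence I may assume that $Q_t^n\to+\infty$ on a set $B\in\cF_{t-1}$ with $\P[B]>0$; the case $Q_t^n\to-\infty$ is handled symmetrically using the strict inequality $\overline{\psi}^{t-1}>\overline{\psi}^t$ from~\eqref{eq:24} in place of $\underline{\psi}_{t-1}<\underline{\psi}_t$. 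For $\eta>0$ small enough the $\cF_{t-1}$-measurable set $A_\eta:=\{\underline{\psi}_t-\underline{\psi}_{t-1}>\eta\}$ intersects $B$ in a set of positive probability, and it will suffice to show that $X_T^n+Q_T^n\psi\to+\infty$ in probability on $A_\eta\cap B$.

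Introduce the representative agent $r_n:=r(V_t^n,\cdot)$, which by Assumption~\ref{asp:1} and Lemma~\ref{lem:1} is an $\cF_{t-1}$-measurable family of utility functions with risk aversion uniformly bounded away from $0$ and $\infty$ in $n$, and which, as in the proof of Theorem~\ref{thm:1}, satisfies $\condexp[\cF_{t-1}]{r_n(\Sigma_t^n)}=\abr{V_t^n,u^n}\in[-L,-\ell]$ for uniform constants $0<\ell\le L<\infty$. Writing $\Sigma_0=\widetilde{\Sigma}_0+q_0\psi$ with $\widetilde{\Sigma}_0$ bounded and absorbing the constant $q_0$ into $Q_t^n$, I may take $|\Sigma_0|\le C_0$. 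Set $\Delta:=\psi-\underline{\psi}_{t-1}\ge 0$ and $Y_n:=X_t^n+Q_t^n\underline{\psi}_{t-1}\in\cF_{t-1}$, so that $\Sigma_t^n=\Sigma_0+Y_n+Q_t^n\Delta$. The inequality $\Sigma_t^n\ge-C_0+Y_n$ together with monotonicity of $r_n$ and the indifference identity immediately yields an $\cF_{t-1}$-measurable upper bound $Y_n\le C$ finite a.s. For the crucial lower bound, observe that on $\{Y_n\le-Q_t^n\epsilon-M\}\cap\{\Delta\le\epsilon\}$ one has $\Sigma_t^n\le C_0-M$, so
\[
L\ge\condexp[\cF_{t-1}]{-r_n(\Sigma_t^n)}\ge(-r_n(C_0-M))\,\condprob[\cF_{t-1}]{\Delta\le\epsilon}\,\mathbf{1}_{\{Y_n\le-Q_t^n\epsilon-M\}}\,.
\]
Since the uniform lower bound on risk aversion forces $-r_n(C_0-M)$ to grow at least exponentially in $M$, and since $\condprob[\cF_{t-1}]{\Delta\le\epsilon}>0$ a.s.\ by the very definition of $\underline{\psi}_{t-1}$, there is for each $\epsilon>0$ an $\cF_{t-1}$-measurable $M_\epsilon$ finite a.s.\ such that $Y_n>-Q_t^n\epsilon-M_\epsilon$ a.s., uniformly in $n$. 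Dividing by $Q_t^n$ on $B$ and letting $\epsilon\downarrow 0$ gives $Y_n/Q_t^n\to 0$ on $B$.

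The pointwise lower bound $\Sigma_t^n\ge-C_0-M_\epsilon+Q_t^n(\eta-\epsilon)$ now holds on $A_\eta\cap B$ for any $\epsilon<\eta$, and its $\cF_{t-1}$-measurable right-hand side tends to $+\infty$ on $B$. Taking $\cF_t$-conditional expectations in $r_n(\Sigma_t^n)\mathbf{1}_{A_\eta}\ge r_n(-C_0-M_\epsilon+Q_t^n(\eta-\epsilon))\mathbf{1}_{A_\eta}$ sandwiches $\mathbf{1}_{A_\eta}\abr{V_t^n,U_t^n}$ between a quantity tending to $0$ and the trivial upper bound $0$, and since $V_t^{m,n}\ge\underline{V}_t^m>0$ this forces $U_t^n\to 0$ on $A_\eta\cap B$. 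The martingale identity $U_t^{m,n}=\condexp[\cF_t]{U_T^{m,n}}$ together with the Markov-based estimate $\P[A_\eta\cap B\cap\{-U_T^{m,n}>\delta\}]\le\E[\mathbf{1}_{A_\eta\cap B}\min(-U_t^{m,n}/\delta,1)]\to 0$ (which follows by bounded convergence from $U_t^n\to 0$) then propagates this to $U_T^n\to 0$ in probability on $A_\eta\cap B$, and the representation $X_T^n+Q_T^n\psi=\sum_m u_m^{-1}(U_T^{m,n})-\Sigma_0$ yields the desired blow-up on a set of positive probability. The main obstacle is the lower bound $Y_n\ge-Q_t^n\epsilon-M_\epsilon$: it is precisely there that the strict monotonicity in~\eqref{eq:24} and the exponential blow-up of $-r_n$ at $-\infty$ (guaranteed by Assumption~\ref{asp:1} alone, without any stabilization) must combine to rule out a degenerate response of the cash balance that would leave $\Sigma_t^n$ bounded even as $Q_t^n\to+\infty$.
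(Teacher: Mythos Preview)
Your overall architecture matches the paper's: pass to a subsequence with $Q^n_t\to+\infty$ on an $\cF_{t-1}$-set of positive probability, establish the cash--balance asymptotics $X^n_t/Q^n_t\to-\underline{\psi}_{t-1}$ (equivalently your $Y_n/Q^n_t\to 0$), exploit the strict gap $\underline{\psi}_t>\underline{\psi}_{t-1}$ on an $\cF_t$-event of positive probability to force $\abr{V^n_t,U^n_t}\to 0$ there, and then propagate to time $T$ via the martingale property of $U^n$. However, there is a genuine gap: you write ``$\Sigma_0=\widetilde{\Sigma}_0+q_0\psi$ with $\widetilde{\Sigma}_0$ bounded'', but this structural hypothesis belongs to Theorem~\ref{thm:1} and is \emph{not} part of Theorem~\ref{thm:2}, which is stated under Assumptions~\ref{asp:1} and~\ref{asp:2} alone. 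Your derivation of the crucial lower bound $Y_n>-Q^n_t\epsilon-M_\epsilon$ rests on the pointwise estimate $\Sigma^n_t\le C_0-M$ on $\{Y_n\le-Q^n_t\epsilon-M\}\cap\{\Delta\le\epsilon\}$, and this breaks down once $\Sigma_0$ is merely an unbounded random variable with finite expected utility.

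The paper avoids this by isolating the asymptotics in Lemma~\ref{lem:4}, whose proof uses concavity rather than a pointwise bound: from
\[
r(\underline{V}_t,\Sigma(X^n_t,Q^n_t))\le r(\underline{V}_t,\Sigma_0)+\partial_x r(\underline{V}_t,\Sigma_0)\,(X^n_t+Q^n_t\psi)
\]
one takes $\cF_{t-1}$-conditional expectations (finite by Assumption~\ref{asp:2}), divides by $Q^n_t$, and lets $n\to\infty$ to obtain $\liminf_n X^n_t/Q^n_t\ge-\underline{\psi}_{t-1}$; the reverse inequality follows from $\psi\ge\underline{\psi}_{t-1}$ and boundedness of $\abr{V^n_t,u^n}$ away from zero. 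The final sandwich step likewise keeps $\Sigma_0$ inside the conditional expectation and appeals to dominated convergence with $|r(\overline{V}_t,\Sigma_0)|$ as dominator. Your exponential-growth argument for the lower bound could be repaired along similar lines (replace $C_0$ by the random $\Sigma_0$ and use monotone convergence of $\condexp[\cF_{t-1}]{-r(\underline{V}_t,\Sigma_0-M)\mathbf{1}_{\{\Delta\le\epsilon\}}}$), but as written it proves a weaker statement than Theorem~\ref{thm:2}.

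One minor slip that does not affect the logic: $A_\eta=\{\underline{\psi}_t-\underline{\psi}_{t-1}>\eta\}$ is $\cF_t$-measurable, not $\cF_{t-1}$-measurable (since $\underline{\psi}_t\in\cF_t$). Your subsequent use of $\cF_t$-conditional expectations is nonetheless correct, and $\P[A_\eta\cap B]>0$ does follow from~\eqref{eq:24} together with $B\in\cF_{t-1}$.
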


\begin{Remark}
As an application of this theorem, all non-degenerate trinomial and higher monomial
models exhibit efficient friction.
\end{Remark}

\begin{proof}
We start with the same observations as in the proof of
Theorem~\ref{thm:1} and note that up to and including the
estimate~\eqref{eq:22} all arguments hold true under the assumptions
of the present theorem.

It thus suffices to identify, for $t=1,\dots,T$, a set $D_t \in \cF_t$
with positive probability where
$\condexp{r(\overline{V}_t,\Sigma(X^n_t,Q^n_t))} \to 0$. It then
follows that $\abr{\underline{V}_t,U^n_t} \to 0$ on this set and, just
as in the proof of Theorem~\ref{thm:1}, we can conclude that $U^n_t
\to 0$ and, thus, also $U^n_T \to 0$ in probability on $D_t$. As
before this is equivalent to $X^n_T+ Q^n_T \psi \to \infty$ in
probability on $D_t$, proving the asserted efficient friction.

By working with subsequences we can confine ourselves to the case
where $\cbr{\lim_n Q^n_t = \pm \infty}$ has positive probability. The
argument on $\cbr{\lim_n Q^n_t = -\infty}$ being similar, let us assume
that the set $\cbr{\lim_{n} Q^n_t = +\infty}$ has positive
probability. Clearly, on $\cbr{Q^n_t>0}$ we have
\begin{align*}
0 &\geq\condexp{r(\overline{V}_t,\Sigma(X^n_t,Q^n_t))}  \\
& \geq \condexp{r(\overline{V}_t,\Sigma_0 + Q^n_t(\underline{\psi}_t+X^n_t/Q^n_t))}. 
\end{align*}
By condition~\eqref{eq:24} we can find an $\epsilon>0$ such that $D_t
\set \cbr{\lim_{n} Q^n_t = +\infty} \cap
\cbr{\underline{\psi}_{t-1}+\epsilon<\underline{\psi}_{t},
\overline{\psi}^{t-1}-\epsilon>\overline{\psi}^t}$ has positive
probability, since $\cbr{\lim_{n} Q^n_t = +\infty}$ is $\cF_{t-1}$-measurable
and
$$
D_t \uparrow \cbr{\lim_{n} Q^n_t = +\infty} \cap 
\cbr{\underline{\psi}_{t-1} <\underline{\psi}_{t},
\overline{\psi}^{t-1}>\overline{\psi}^t}
$$ 
as $\epsilon \downarrow 0$. We shall argue in Lemma~\ref{lem:4} below that
\begin{equation*}
  \lim_n X^n_t/Q^n_t = -\underline{\psi}_{t-1} \text{ on } \cbr{\lim_{n} Q^n_t = +\infty}\,.
\end{equation*} 
On $D_t$ we can thus furthermore find a random variable
$N^{\epsilon}$ such that for $n>N^{\epsilon}$ the above estimation
can be continued by
\begin{align*}
\dots \geq  & \condexp{r(\overline{V}_t,\Sigma_0 + Q^n_t(\underline{\psi}_t+X^n_t/Q^n_t))}\\
\geq & \condexp{r(\overline{V}_t,\Sigma_0 + Q^n_t(\underline{\psi}_t-\underline{\psi}_{t-1}-\epsilon/2))}\\
\geq & \condexp{r(\overline{V}_t,\Sigma_0 + Q^n_t\epsilon/2)}.
\end{align*}
On $D_t$ the latter expression converges to zero by dominated convergence as required.
\end{proof}

The following lemma establishes the asymptotics of cash balances for
extreme long and short positions in $\psi$:

\begin{Lemma}\label{lem:4}
Under Assumptions~\ref{asp:1} and~\ref{asp:2} we have
\begin{equation}
  \label{eq:26}
  \lim_n X^n_t/Q^n_t = -\underline{\psi}_{t-1} \text{ on } \cbr{\lim_{n} Q^n_t = +\infty}
\end{equation}
and
\begin{equation}
  \label{eq:27}
  \lim_n X^n_t/Q^n_t = -\overline{\psi}^{t-1} \text{ on } \cbr{\lim_{n} Q^n_t = -\infty}
\end{equation}
for any sequence of strategies $(Q^n)_{n=1,2,\dots}$ with cash
balances $(X^n)_{n=1,2,\dots}$ such that $(U^n_{t-1})_{n=1,2,\dots}$
is bounded away from $0$ and $-\infty$.
\end{Lemma}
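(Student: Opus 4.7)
I will focus on establishing~\eqref{eq:26}; the argument for~\eqref{eq:27} is entirely symmetric, working with $\overline{\psi}^{t-1}$ in place of $\underline{\psi}_{t-1}$ and $-Q^n_t$ in place of $Q^n_t$. Write $\Sigma^n_t \set \Sigma_0 + X^n_t + Q^n_t \psi$ and $\Omega_+ \set \{\lim_n Q^n_t = +\infty\}$, which is $\cF_{t-1}$-measurable thanks to predictability of $Q^n$. Just as in the derivation of~\eqref{eq:21}--\eqref{eq:22}, the indifference relation~\eqref{eq:3} applied to the Pareto allocation $\alpha^n_t$ with weights $V^n_t$ yields the identity
\begin{equation*}
\sum_{m=1}^M V^{m,n}_t u^{m,n} \;=\; \condexp[\cF_{t-1}]{r(V^n_t,\Sigma^n_t)}.
\end{equation*}
By the bound~\eqref{eq:20} and the assumed boundedness of $u^{m,n}$ away from $0$ and $-\infty$, the weights $V^{m,n}_t$ lie in an $\cF_{t-1}$-measurable compact subinterval $[\underline{V}^m_t,\overline{V}^m_t]\subset(0,\infty)$, and the left-hand side is sandwiched between two finite strictly negative $\cF_{t-1}$-measurable random variables.

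For the upper bound $\limsup_n X^n_t/Q^n_t \leq -\underline{\psi}_{t-1}$ on $\Omega_+$, I would argue by contradiction: if for some $\delta>0$ there is an $\cF_{t-1}$-measurable set $A\subseteq\Omega_+$ of positive probability on which $X^n_t/Q^n_t \geq -\underline{\psi}_{t-1}+\delta$ along some subsequence (which can be extracted by a standard reverse-Fatou argument applied to the indicator of the bad event), then $\psi\geq\underline{\psi}_{t-1}$ gives $\Sigma^n_t\geq\Sigma_0+\delta Q^n_t \to +\infty$ on $A$, and monotonicity of $r(v,\cdot)$ together with $r(v,+\infty)=0$ (inherited from $u_m(+\infty)=0$) implies
\begin{equation*}
0 \geq r(V^n_t,\Sigma^n_t) \geq r(V^n_t,\Sigma_0+\delta Q^n_t) \longrightarrow 0 \quad\text{on } A.
\end{equation*}
Taking $\cF_{t-1}$-conditional expectation then forces $\condexp[\cF_{t-1}]{r(V^n_t,\Sigma^n_t)}\to 0$ on $A$ along the subsequence, contradicting the strict negativity of $\sum_m V^{m,n}_t u^{m,n}$.

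For the matching lower bound, assume instead that along a subsequence $X^n_t/Q^n_t\leq-\underline{\psi}_{t-1}-\delta$ on an $\cF_{t-1}$-measurable set $A\subseteq\Omega_+$ of positive probability, and set $B\set\{\psi<\underline{\psi}_{t-1}+\delta/2\}\in\cF_T$. By the very definition of the conditional essential infimum, $\condprob[\cF_{t-1}]{B}>0$ almost surely. On $A\cap B$ one has
\begin{equation*}
\Sigma^n_t \;\leq\; \Sigma_0 + Q^n_t(-\underline{\psi}_{t-1}-\delta) + Q^n_t(\underline{\psi}_{t-1}+\delta/2) \;=\; \Sigma_0 - (\delta/2)Q^n_t \;\longrightarrow\; -\infty,
\end{equation*}
so $r(V^n_t,\Sigma^n_t)\to-\infty$ there, using that $V^n_t$ ranges in a bounded subset of $(0,\infty)^M$ and $r(v,-\infty)=-\infty$ for each such $v$. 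Applying the conditional Fatou lemma to the nonnegative sequence $-r(V^n_t,\Sigma^n_t)\mathbf{1}_B$ then yields $\liminf_n\condexp[\cF_{t-1}]{-r(V^n_t,\Sigma^n_t)\mathbf{1}_B}=+\infty$ on $A$, contradicting the uniform boundedness of $\condexp[\cF_{t-1}]{-r(V^n_t,\Sigma^n_t)}=-\sum_m V^{m,n}_t u^{m,n}$.

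The main difficulty I anticipate is not the asymptotic analysis itself but the measurability and subsequence bookkeeping: turning the essential-infimum characterization of $\underline{\psi}_{t-1}$ (which need not be attained by $\psi$) into workable pointwise conditional estimates on the event $B$, and verifying that the representative agent's utility $r(v,\cdot)$---evaluated at the random weights $V^n_t$---retains monotone convergence to $0$ at $+\infty$ and divergence to $-\infty$ at $-\infty$ uniformly over the $\cF_{t-1}$-measurable compact range of the weights, so that both contradictions may be extracted on sets of positive unconditional probability.
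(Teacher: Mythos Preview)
Your proof is correct and begins exactly as the paper does, with the indifference identity $\sum_m V^{m,n}_t u^{m,n}=\condexp[\cF_{t-1}]{r(V^n_t,\Sigma^n_t)}$ and the two-sided bounds on $V^n_t$ from~\eqref{eq:20}, which pin this quantity strictly between two negative $\cF_{t-1}$-measurable random variables. For the upper bound $\limsup_n X^n_t/Q^n_t\leq-\underline{\psi}_{t-1}$ your contradiction argument is essentially the paper's, merely rephrased: the paper writes it directly as $r(V^n_t,\Sigma^n_t)\geq r(\,\cdot\,,\Sigma_0+X^n_t+Q^n_t\underline{\psi}_{t-1})$ on $\{Q^n_t>0\}$ and concludes $\sup_n(X^n_t+Q^n_t\underline{\psi}_{t-1})<\infty$ pointwise. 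For the lower bound the approaches genuinely differ. The paper linearizes via concavity, $r(\underline{V}_t,\Sigma^n_t)\leq r(\underline{V}_t,\Sigma_0)+\partial_x r(\underline{V}_t,\Sigma_0)(X^n_t+Q^n_t\psi)$, multiplies by $1_A$ for arbitrary $A\in\cF_T$, divides by the $\cF_{t-1}$-measurable $Q^n_t$, and passes to the limit to obtain $\liminf_n X^n_t/Q^n_t+\psi\geq 0$ a.s.\ directly; your route via the event $B=\{\psi<\underline{\psi}_{t-1}+\delta/2\}$ with $\condprob[\cF_{t-1}]{B}>0$ and conditional Fatou is more elementary (no need for $\partial_x r$) and is perfectly valid. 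One caveat on the bookkeeping you flag: the ``reverse-Fatou'' extraction of a common subsequence on a fixed set $A$ is not quite right as stated, since $\P(\limsup_n E_n)>0$ does not imply $\limsup_n\P(E_n)>0$. The clean repair is to argue pointwise at each $\omega_0$ in the bad set, choosing the $\omega_0$-dependent subsequence; this works because all the quantities $X^n_t,Q^n_t,\underline{\psi}_{t-1},\underline{V}_t,\overline{V}_t$ that enter your conditional-expectation estimates are $\cF_{t-1}$-measurable, so both contradictions localize to the $\cF_{t-1}$-atom of $\omega_0$.
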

\begin{proof}
The argument for~\eqref{eq:27} being similar, let us
establish~\eqref{eq:26}. To see that `$\geq$' holds, we note that
\begin{align*}
\abr{V^n_t,U^n_{t-1}} = \condexp[\cF_{t-1}]{\abr{V^n_t,U^n_t}} = \condexp[\cF_{t-1}]{r(V^n_t,\Sigma(X^n_t,Q^n_t))}
\end{align*}
is bounded by assumption on $u^n \set U^n_{t-1}$ because~\eqref{eq:20} holds for our
choice of $V^n_t$, $n=1,2,\dots$. Moreover, the last term is easily
estimated for any $A \in \cF_T$:
\begin{align*}
\condexp[\cF_{t-1}]{r(V^n_t,\Sigma(X^n_t,Q^n_t))}
& \leq \condexp[\cF_{t-1}]{r(\underline{V}_t,\Sigma(X^n_t,Q^n_t))1_A}\\
& \leq \condexp[\cF_{t-1}]{(r(\underline{V}_t,\Sigma_0) +
  \partial_x r(\underline{V}_t,\Sigma_0)(X^n_t+Q^n_t \psi))1_A}.
\end{align*}
On $\cbr{\lim_n Q^n_t = +\infty}$ we thus can divide by the
$\cF_{t-1}$-measurable quantity $Q^n_t$ in this series of inequalities
and let $n\uparrow \infty$ to deduce that
\[
0 \leq \condexp[\cF_{t-1}]{
  \partial_x r(\underline{V}_t,\Sigma_0)(\liminf_n X^n_t/Q^n_t+ \psi)1_A}.
\]
As $A \in \mathcal{F}_T$ is arbitrary, this implies that 
\begin{equation*}
\liminf_n X^n_t/Q^n_t+ \psi \geq 0,
\end{equation*}
i.e., $-\underline{\psi}_{t-1} \leq \liminf_n X^n_t/Q^n_t$, because
$X_t^n$, $Q_t^n$ are $\mathcal{F}_{t-1}$-measurable for all $n = 1, 2,
\ldots$.

On the other hand, 
\begin{align*}
\abr{V^n_t,U^n_{t-1}} = \condexp[\cF_{t-1}]{r(V^n_t,\Sigma(X^n_t,Q^n_t))}
\end{align*}
is also bounded away from zero, again by assumption on
$(u^n)_{n=1,2,\dots}$ and~\eqref{eq:20}. Moreover, on $\cbr{Q^n_t>0}$,
\begin{align*}
\condexp[\cF_{t-1}]{r(V^n_t,\Sigma(X^n_t,Q^n_t))}&\geq
\condexp[\cF_{t-1}]{r(\underline{V}_t,\Sigma_0+X^n_t+Q^n_t\underline{\psi}_{t-1})}.
\end{align*}
Because $X^n_t$ and $Q^n_t$ are $\cF_{t-1}$-measurable, $n=1,2,\dots$,
this implies that $\sup_n \cbr{X^n_t+Q^n_t
  \underline{\psi}_{t-1}}<\infty$. On $\cbr{\lim_n Q^n_t=+\infty}$ this
yields $\limsup_n X^n_t/Q^n_t+\underline{\psi}_{t-1} \leq 0$, proving
`$\leq$' in~\eqref{eq:26}.
\end{proof}

\subsection{(In)completeness of binomial models with exponential market makers}
\label{sec:binomial}

Next we consider a model with time horizon $T$, one market maker $M = 1$
with exponential utility $u(x)=-e^{- \alpha x}$, $x \in \RR$, one asset
$J=1$ and initial endowment $\Sigma_0$ satisfying $\E \left[ e^{-\alpha \Sigma_0} \right]  < \infty$
for the market maker. The model has a binomial structure, i.e., $\Omega = \cbr{-1,+1}^T$,
the filtration $(\cF_t)_{t=1,\ldots,T}$ is generated by
$Y_t(\omega)\set y_t$ for $\omega=(y_1,\ldots,y_T) \in \Omega$,
$t=1,\ldots,T$, and we assume $\P[\cbr{\omega}]>0$ for all $\omega \in
\Omega$.  The payoff $\psi$ can be any real-valued $\cF_T$-measurable
random variable. 

\begin{Theorem} \label{thm:3} For a market maker with
  exponential utility, a binomial model as described above is complete
  if and only if at any time we see a new best lower bound for $\psi$ in one
  possible evolution to the next time period and a new best upper
  bound for $\psi$ in the other:
 \begin{equation}
   \label{eq:28}
  \Omega = \cbr{\underline{\psi}_{t-1}<\underline{\psi}_{t}} \cup
  \cbr{\overline{\psi}^{t-1}>\overline{\psi}^{t}}, \; t=1,\ldots,T\,.
    \end{equation}
\end{Theorem}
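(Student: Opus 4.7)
The plan is to reduce Theorem~\ref{thm:3} to a one-step surjectivity analysis at every binomial node, exploiting the explicit dynamics available for a single exponential market maker. Writing $\Phi_s(q) \set \condexp[\cF_s]{e^{-\alpha(\Sigma_0+q\psi)}}$ for $s=0,\ldots,T$, the indifference relation~\eqref{eq:3} for $u(x) = -e^{-\alpha x}$ is easily solved (the single Pareto allocation being $\alpha_t = \Sigma_0+X_t+Q_t\psi$) to give $e^{-\alpha X_t} = -U_{t-1}/\Phi_{t-1}(Q_t)$, so that
\[
U_t = U_{t-1}\cdot \Phi_t(Q_t)/\Phi_{t-1}(Q_t)\,.
\]
Replicating a claim $H$ from initial capital $\pi$ is equivalent to realising the terminal utility $U_T = -\exp(-\alpha(\Sigma_0+\pi-H))$, and the martingale identity $U_0 = \E U_T = u_0$ uniquely fixes $\pi$ in terms of $H$; hence completeness amounts to saying that every $\cF_T$-measurable $V<0$ with $\E V = u_0$ arises as $U_T$ for some predictable $Q$.

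At any $\cF_{t-1}$-node $A$ with current level $u \set U_{t-1}|_A < 0$, the two $\cF_t$-children $A^{\pm} \set A \cap \cbr{Y_t = \pm 1}$ carry conditional probabilities $p$ and $1-p$ and produce, as $Q_t$ ranges over $\RR$, the pair
\[
U_t^{\pm}(q) = u\cdot \Phi_t^{\pm}(q)/\rbr{p\,\Phi_t^+(q)+(1-p)\,\Phi_t^-(q)}\,,
\]
which lies on the open segment $L_u \set \cbr{(v^+,v^-) \in (-\infty,0)^2 : pv^++(1-p)v^- = u}$. A straightforward backward induction through the binomial tree thus shows that completeness is equivalent to the surjectivity of $q \mapsto (U_t^+(q),U_t^-(q))$ onto $L_u$ at every node and every time, so it is enough to show that this one-step surjectivity is equivalent to condition~\eqref{eq:28} holding on $A$.

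Parametrising $L_u$ by $R(q) \set \Phi_t^-(q)/\Phi_t^+(q) \in (0,\infty)$ (via $U_t^+ = u/(p+(1-p)R)$), continuity of $R$ reduces surjectivity to $\liminf_{q} R(q)=0$ and $\limsup_q R(q)=+\infty$. A routine Laplace asymptotic --- finite atoms with strictly positive probability let us factor out the leading exponential and conclude by dominated convergence --- gives
\[
\lim_{q \to +\infty} e^{\alpha q(\underline{\psi}_t^- - \underline{\psi}_t^+)} R(q) \in (0,\infty), \quad \lim_{q \to -\infty} e^{\alpha q(\overline{\psi}^{t,-} - \overline{\psi}^{t,+})} R(q) \in (0,\infty),
\]
where $\underline{\psi}_t^{\pm}$, $\overline{\psi}^{t,\pm}$ are the essential infima and suprema of $\psi$ restricted to $A^{\pm}$. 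Thus $R(+\infty)\in\cbr{0,+\infty}$ iff the two children have distinct conditional infima and $R(-\infty)\in\cbr{0,+\infty}$ iff they have distinct conditional suprema, and requiring that $R$ attain both $0$ and $\infty$ in its two limits forces the same child to carry simultaneously the larger conditional infimum and the larger conditional supremum.

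It then remains a finite case check to identify this with~\eqref{eq:28} on $A$: since $\underline{\psi}_{t-1}|_A = \min(\underline{\psi}_t^+,\underline{\psi}_t^-)$ and $\overline{\psi}^{t-1}|_A = \max(\overline{\psi}^{t,+},\overline{\psi}^{t,-})$, the child $A^+$ lies in $\cbr{\underline{\psi}_{t-1}<\underline{\psi}_t}$ iff $\underline{\psi}_t^- < \underline{\psi}_t^+$ (and symmetrically for $A^-$ and for the supremum), so $A \subseteq \cbr{\underline{\psi}_{t-1}<\underline{\psi}_t} \cup \cbr{\overline{\psi}^{t-1}>\overline{\psi}^t}$ is exactly the statement that one child dominates the other at both extremes. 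The forward direction then finishes by backward induction: given target $U_T$, read off the required $(U_t^+,U_t^-) \in L_{U_{t-1}}$ and pick $Q_t$ by surjectivity. Conversely, if~\eqref{eq:28} fails on some $A\in\cF_{t-1}$, choose $(v^+,v^-) \in L_u$ outside the image, build \emph{any} martingale $U$ with $\E U_T = u_0$ whose $(U_t^+,U_t^-)$ equals $(v^+,v^-)$ on $A$, and read off the corresponding $H$ which is then provably not replicable. The one substantive technical point is the Laplace asymptotic, but it is routine in this finite sample space.
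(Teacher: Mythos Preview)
Your argument is correct and follows essentially the same route as the paper's proof. Both reduce the problem, via the explicit exponential dynamics $U_t = U_{t-1}\,\Phi_t(Q_t)/\Phi_{t-1}(Q_t)$, to a one-step range analysis at every binomial node, and both identify the endpoints of that range by the same Laplace-type asymptotic (the paper phrases this as Claim~3 referring back to Proposition~\ref{prop:1}, you phrase it via the ratio $R(q)=\Phi_t^-(q)/\Phi_t^+(q)$); the final combinatorial identification with~\eqref{eq:28} is likewise the same case check, just written in terms of the children's extrema $\underline{\psi}_t^\pm$, $\overline{\psi}^{t,\pm}$ rather than the conditional probability ratios $\underline{P}_t$, $\overline{P}_t$. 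The only cosmetic differences are that the paper first passes to $\Sigma_0=0$ by a change of measure, and that it encodes the one-step condition as equation~\eqref{eq:30} rather than as surjectivity onto the segment $L_u$; neither alters the substance.
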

\begin{proof} 
  By passing to $\P'$ with $d\P'/d\P = e^{-\alpha \Sigma_0}/\E e^{-\alpha
    \Sigma_0}$ we can assume $\Sigma_0=0$ without loss of
  generality. We will proceed through a number auxiliary claims:
\begin{description}
\item[Claim 1:] $H \in \mathcal{F}_T$ is attainable by a strategy $Q$
  with initial capital $\pi^H$ if and only if
\begin{equation} \label{eq:29}
\pi^H = \frac{1}{\alpha} \log \left( \E \left[ e^{\alpha H} \right] \right)
\end{equation}
and $Q$ is a predictable process that satisfies
\begin{equation} \label{eq:30} 
\frac{\E \left[ e^{\alpha H} | \cF_t \right]}{\E \left[ e^{\alpha
      H} | \cF_{t-1} \right]} 
= \frac{\E \left[ e^{-\alpha Q_t \psi} | \cF_t \right]}{\E \left[ e^{-\alpha Q_t
        \psi} | \cF_{t-1} \right]}, \text{ on } \cbr{Y_t=+1}, \  t = 1, 2, \ldots, T.
\end{equation}

Indeed, if $H$ is attainable by a strategy $Q$ with initial cost
$\pi^H$, then
\begin{equation*}
-1 = u(0) = \E \left[ u\left( X_T + Q_T \psi \right) \right] 
= \E \left[ u(\pi^H - H) \right] 
= -e^{-\alpha \pi^H} \E \left[ e^{\alpha H} \right],
\end{equation*}
since $H =\pi - X_T - Q_T \psi$. Therefore, the price $\pi^H$ and the
utility process $U$ satisfy
\begin{align*}
\pi^H & = \frac{1}{\alpha} \log \left( \E \left[ e^{\alpha H} \right] \right) \\ 
 \frac{U_t}{U_{t-1}} & = \frac{\E \left[ u\left(\pi^H - H\right ) | \cF_t \right] }{\E \left[ u\left(\pi^H - H\right ) | \cF_{t-1} \right]} = \frac{\E \left[ e^{\alpha H} | \cF_t \right]}{\E \left[ e^{\alpha H} | \cF_{t-1} \right]},  \; t = 1,2, \ldots, T.
\end{align*}
On the other hand, by predictability of the associated cash process $X$,
\begin{equation*}
U_{t-1} = \E \left[ u \left( X_t + Q_t \psi \right) | \cF_{t-1} \right] =
-e^{-\alpha X_t} \E \left[ e^{-\alpha Q_t \psi} | \cF_{t-1} \right], \ t =1, 2, \ldots, T
\end{equation*}
which implies that
\begin{equation*}
\frac{U_t}{U_{t-1}} = \frac{\E \left[ e^{-\alpha Q_t \psi} | \cF_t \right]}{\E \left[ e^{-\alpha Q_t \psi} | \cF_{t-1} \right]}, \quad \ t =1, 2, \ldots, T.
\end{equation*}
Hence, the strategy $Q$ satisfies~\eqref{eq:30}.

Conversely, let $\pi^H$ be given by~\eqref{eq:29} and $Q$ be
predictable with~\eqref{eq:30}. Then, in fact, the identity
in~\eqref{eq:30} holds on all of $\Omega=\cbr{Y_t=+1}\cup\cbr{Y_t=-1}$
because the random variable on either side has conditional expectation
1 given $\cF_{t-1}$. Now let $U_0 = u(0)$ and let $U$ be the process
defined recursively by
\begin{equation*}
\frac{U_t}{U_{t-1}} = 
\frac{\E \left[ e^{\alpha H} | \cF_t \right]}{\E \left[ e^{\alpha H} | \cF_{t-1}
  \right]} 
= \frac{\E \left[ e^{-\alpha Q_t \psi} | \cF_t \right]}{\E \left[ e^{-\alpha Q_t \psi} | \cF_{t-1} \right]}\, \quad  t = 1, 2, \ldots, T.
\end{equation*}
Then according to~\eqref{eq:29}
\begin{equation*}
\frac{U_T}{U_0} = \prod_{s=1}^T \frac{U_s}{U_{s-1}} = \frac{e^{\alpha H}}{\E \left[ e^{\alpha H} \right]} = e^{-\alpha(\pi^H-H)}.
\end{equation*}
On the other hand, $X$ given by
\begin{equation*}
X_t = \frac{1}{\alpha} \log \left( -\frac{\E \left[ e^{-\alpha Q_t \psi} | \cF_{t-1} \right]}{U_{t-1}} \right), \quad t =1, 2, \ldots, T,
\end{equation*}
is the cash process of $Q$ with indirect utility $U$. In particular,
for $t =T$, we have $U_T = u(0) e^{-\alpha (X_T + Q_T \psi)}$, and
since $U_T = u(0) e^{-\alpha(\pi^H - H)}$ we get that
\begin{equation*}
\pi^H - X_T - Q_T \psi = H \,.
\end{equation*}

\item[Claim 2:] As $H$ varies over all the $\cF_T$-measurable random
  variables, the left side of~\eqref{eq:30} sweeps for fixed
  $t=1,\ldots,T$ precisely across all the random variables $Z_t$ with
  $\condexp[\cF_{t-1}]{Z_t}=1$ which are of the form
  $Z_t = f_t(Y_1,\ldots,Y_t)$ for some function $f_t:\cbr{-1,+1}^t \to
  (0,\infty)$ such that
\[
0 < f_t(Y_1,\ldots,Y_{t-1},y_t) < 1/\condprob[\cF_{t-1}]{Y_t=y_t},
\quad y_t \in \cbr{-1,+1}\,.
\]

 Indeed, this is readily checked from 
\begin{align*}
\condexp[\cF_{t-1}]{Z_t}=&
f_t(Y_1,\dots,Y_{t-1},+1) \condprob[\cF_{t-1}]{Y_t=+1}\\&+f_t(Y_1,\dots,Y_{t-1},-1) \condprob[\cF_{t-1}]{Y_t=-1}.
\end{align*}

\item[Claim 3:] As $Q_t$ varies over the $\cF_{t-1}$-measurable random
  variables, the right side of~\eqref{eq:30} varies over the open interval
  bounded by $\underline{P}_t \set
  \frac{\condprob{\psi=\underline{\psi}_{t-1}}}{\condprob[\cF_{t-1}]{\psi=\underline{\psi}_{t-1}}}$
  and $\overline{P}_t \set
  \frac{\condprob{\psi=\overline{\psi}^{t-1}}}{\condprob[\cF_{t-1}]{\psi=\overline{\psi}^{t-1}}}$.

Indeed, this follows by the same argument as given in the proof for the second
claim in Proposition~\ref{prop:1}.
\end{description}

In light of Claims 1--3, completeness holds if and only if for any
$t=1,\dots,T$ we have $$\underline{P}_t \wedge \overline{P}_t = 0
\text{ and }
\underline{P}_t \vee \overline{P}_t =
1/\condprob[\cF_{t-1}]{Y_t=y_t}.$$
It follows that completeness yields $$\cbr{\underline{P}_t \leq
\overline{P}_t}\subseteq \cbr{\underline{P}_t =0}=
\cbr{\condprob{\psi=\underline{\psi}_{t-1}}=0}=\cbr{\underline{\psi}_{t-1}<\underline{\psi}_t}$$
and, similarly, $$\cbr{\underline{P}_t \geq \overline{P}_t}\subseteq \cbr{\overline{P}_t =0}=
\cbr{\condprob{\psi=\overline{\psi}^{t-1}}=0}=\cbr{\overline{\psi}^{t-1}>\overline{\psi}_t}.$$ In
particular, completeness implies~\eqref{eq:28}. Conversely, observe 
$$
\cbr{\overline{\psi}^{t-1}>\overline{\psi}^t} \cup \cbr{\underline{\psi}_{t-1}>\underline{\psi}_t} =\cbr{\underline{P}_t=0}\cup\cbr{\overline{P}_t=0}
$$
and thus~\eqref{eq:28} yields $\underline{P}_t \wedge
\overline{P}_t=0$.  To conclude we need to argue that $\underline{P}_t
\vee \overline{P}_t(\omega) = 1/\condprob[\cF_{t-1}]{Y_t=y_t}(\omega)$
for $\omega=(y_1,\dots,y_T) \in \Omega$. Without loss of generality,
it is sufficient to prove for $$\omega \in
\cbr{\underline{P}_t>0}=\cbr{\condprob{\psi=\underline{\psi}_{t-1}}>0}=\cbr{\underline{\psi}_t=\underline{\psi}_{t-1}}$$
we have in fact
$\underline{P}_t(\omega)=1/\condprob[\cF_{t-1}]{Y_t=y_t}(\omega)$. We
recall that
$$\condprob[\cF_{t-1}]{\psi=\underline{\psi}_{t-1}}=\condexp[\cF_{t-1}]{\condprob[\cF_{t}]{\psi=\underline{\psi}_{t-1}}}.$$
Thus $\underline{P}_t(\omega)=1/\condprob[\cF_{t-1}]{Y_t=y_t}(\omega)$ is equivalent to having 
$$\condprob[\cF_{t}]{\psi=\underline{\psi}_{t-1}}(\omega')=0$$ where
$\omega'=(y'_1,\dots,y'_T) \in \Omega$ is given by $y'_s \set y_s$ for $s
\not=t$ and $y'_t\set-y_t$. So assume for a contradiction
$\condprob[\cF_{t}]{\psi=\underline{\psi}_{t-1}}(\omega')>0$. Then we have 
$\underline{\psi}_{t}(\omega')=\underline{\psi}_{t-1}(\omega')$ in addition to
$\underline{\psi}_{t}(\omega)=\underline{\psi}_{t-1}(\omega)$. By~\eqref{eq:28},
this implies
$\overline{\psi}_{t}(\omega)>\overline{\psi}_{t-1}(\omega)$ and
$\overline{\psi}_{t}(\omega')>\overline{\psi}_{t-1}(\omega')$ in
contradiction to
$\overline{\psi}_{t-1}(\omega)=\overline{\psi}_{t-1}(\omega')=\overline{\psi}_{t}(\omega)
\wedge \overline{\psi}_{t}(\omega')$.
\end{proof}

\begin{Remark} 
  For the model with binomial lattice structure considered in this
  section, one can show by similar arguments that there exists a
  superreplicating strategy if
\begin{equation*}
\emptyset \neq \cbr{\underline{\psi}_{t-1}<\underline{\psi}_{t}} \cup
  \cbr{\overline{\psi}^{t-1}>\overline{\psi}^{t}}, \; t=1,\ldots,T\,.
\end{equation*}
Moreover, in Section ~\ref{sec:counterexample} we provide an example,
where superreplicating cost is not attained in the case
\begin{equation*}
\emptyset = \cbr{\underline{\psi}_{0}<\underline{\psi}_{1}} \cup
  \cbr{\overline{\psi}^{0}>\overline{\psi}^{1}}.
\end{equation*}
\end{Remark}

\section{Examples of processes with decreasing exponential tails}
\label{sec:examples}

To illustrate that our assumption of decreasing exponential tails is
satisfied by typical models considered for stock prices, let us study
a few examples.

\subsection{L\'{e}vy processes}

\begin{Lemma} \label{lem:5}
Let $(X_t)_{t \geq 0}$ be a non-deterministic one-dimensional L\'{e}vy process with respect
to the filtration $(\cF_t)_{t \geq 0}$. Suppose that $X_T$ has all exponential moments, i.e. for all $q \in
\mathbb{R}$, $\E \left[ e^{qX_T} \right] < \infty$. Then $$\psi \set X_T$$
exhibits decreasing exponential tails in the sense of condition~\eqref{eq:12}.
\end{Lemma}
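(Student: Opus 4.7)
The plan is to exploit the independent, stationary increments of the Lévy process to collapse condition~\eqref{eq:12} into a statement about a single increment $Y := X_t - X_{t-1}$, and then to invoke the infinite divisibility of its law. Write $\mu_s \set \P[X_s\in\cdot]$ and $\kappa(q) \set \log \E[e^{qX_1}]$ for the Laplace exponent, which is finite on all of $\RR$ by assumption and satisfies $\int e^{qx}\,\mu_s(dx)=e^{s\kappa(q)}$. Since $X_T - X_t$ is independent of $\cF_t$ with distribution $\mu_{T-t}$, we have $\nu_t = \delta_{X_t}*\mu_{T-t}$, and the ratio defining $\prec$ in~\eqref{eq:10} telescopes to
\[
\frac{\int e^{qx}\,\nu_t(dx)}{\int e^{qx}\,\nu_{t-1}(dx)} = \frac{e^{qX_t+(T-t)\kappa(q)}}{e^{qX_{t-1}+(T-t+1)\kappa(q)}} = e^{qY-\kappa(q)},
\]
where $Y$ is independent of $\cF_{t-1}$ and distributed as $X_1$. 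Hence $\cbr{\nu_t\prec\nu_{t-1}}$ coincides with the event $\cbr{qY-\kappa(q)\to-\infty \text{ as } |q|\to\infty}$, so by independence of $Y$ from $\cF_{t-1}$ condition~\eqref{eq:12} reduces to showing that this event has strictly positive unconditional probability.

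Next I would set $m \set \essinf X_1 \in [-\infty,+\infty)$ and $M \set \esssup X_1 \in (-\infty,+\infty]$ and verify that $\cbr{m<Y<M}$ is contained in the target event. Indeed $\kappa$ is smooth and strictly convex on $\RR$ (the latter because $X_1$ is non-degenerate), and viewing $\kappa'(q)=\E[X_1 e^{qX_1}]/\E[e^{qX_1}]$ as the mean of the Esscher-tilted law of $X_1$, one checks $\kappa'(q)\to M$ as $q\to+\infty$ and $\kappa'(q)\to m$ as $q\to-\infty$. Consequently on $\cbr{Y<M}$ there exist $q_0$ and $\delta>0$ with $\kappa'(q)-Y\geq\delta$ for all $q\geq q_0$, so $\kappa(q)-qY\to+\infty$; the symmetric argument handles $q\to-\infty$ on $\cbr{Y>m}$.

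The main obstacle is the final step: showing $\P[m<Y<M]>0$. Since $X$ is non-deterministic, $\mu_1$ is a non-degenerate infinitely divisible law on $\RR$, and the key structural input I would cite is that such a law must have unbounded support (see, e.g., Sato, \emph{Lévy Processes and Infinitely Divisible Distributions}, Corollary~24.4), so that $m=-\infty$ or $M=+\infty$. Combined with non-degeneracy this forces $\mu_1(\cbr{m})+\mu_1(\cbr{M})<1$ and hence $\P[m<Y<M]>0$, yielding $\P[\nu_t\prec\nu_{t-1}\mid\cF_{t-1}]=\P[m<Y<M]>0$ by independence. The crux is really this last input: without it one cannot exclude a two-point Bernoulli-like law supported on $\cbr{m,M}$, in which case the ratio above would fail to vanish for any realization of $Y$ and~\eqref{eq:12} would break down.
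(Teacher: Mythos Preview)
Your argument is correct. Both you and the paper begin by reducing the ratio in~\eqref{eq:10} to $\exp\bigl(qY-\kappa(q)\bigr)$ with $Y=X_t-X_{t-1}$ independent of $\cF_{t-1}$, but from there the two proofs diverge.

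The paper proceeds computationally via the L\'evy--Khintchine representation $f(q)=bq+\tfrac12 cq^2+\int(e^{qx}-1-qx\mathbf{1}_{\{|x|<1\}})\,\mu(dx)$ and splits into cases. When the L\'evy measure charges $(0,\infty)$ it shows $f(q)\to\infty$ exponentially fast as $q\uparrow\infty$, so the ratio vanishes for \emph{every} realization; when $\mu((0,\infty))=0$ it distinguishes $c>0$ (quadratic growth of $f$) from $c=0$ (at most linear decay of $f$), the latter case requiring one to identify a set of increments of positive probability on which the linear term $qY$ is dominated. The argument for $q\downarrow-\infty$ is handled by applying the same reasoning to $-X$.

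Your approach is more structural: you use only the strict convexity of $\kappa$ together with the identification $\kappa'(+\infty)=M$, $\kappa'(-\infty)=m$ to conclude that $qY-\kappa(q)\to-\infty$ on $\{m<Y<M\}$ in both tails simultaneously, and then invoke the classical fact that a non-degenerate infinitely divisible law on $\RR$ has unbounded support to ensure $\P[m<Y<M]>0$. This avoids all case distinctions and makes the mechanism transparent---in particular it shows at once that $\nu_t\prec\nu_{t-1}$ almost surely whenever $\operatorname{supp}(X_1)=\RR$. The trade-off is that you import a non-trivial external result (the Sato corollary), whereas the paper's argument is entirely self-contained from the L\'evy--Khintchine formula.
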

\begin{proof}
Let us denote the L\'{e}vy triplet of $X$ by $(b, c, \mu)$, where $b \in \mathbb{R}$, $c \geq 0$ and $\mu$ is a L\'{e}vy measure on $\mathbb{R}$ satisfying
\begin{equation} \label{eq:31}
\mu(\cbr{ 0 }) = 0, \quad \int_{\mathbb{R}} \left( 1 \wedge |x|^2
\right) \, \mu(dx) < \infty.
\end{equation}
We exclude the deterministic case where both $c=0$ and $\mu=0$ and we
require that $X_T$ has all exponential moments, i.e. for all $q \in
\mathbb{R}$, $\E \left[ e^{qX_T} \right] < \infty$. This is equivalent
to stating 
\begin{equation}\label{eq:32}
\int_{\cbr{|x| \geq 1 }} e^{qx} \, \mu(dx) < \infty, \; q \in \RR.
\end{equation}

Since $X$ has homogeneous and independent increments we will have
$\nu_{t+h} \prec \nu_t$ for $h>0$ where $\nu_s \set \condprob[\cF_s]{X_T \in
  \cdot}$, $s \in [0,T]$, if
\begin{equation}\label{eq:33}
\frac{\E \left[ e^{q\psi} | \cF_{t+h}  \right]}{\E \left[ e^{q\psi} | \cF_t
  \right]}=\exp(q (X_{t+h}-X_t)-h f(q)) \to 0 \text{ as } |q| \uparrow \infty
\end{equation}
where due to the L\'{e}vy-Khintchine formula 
\begin{equation*}
f(q) = bq + \frac{1}{2} cq^2 + \int_{\mathbb{R}} \cbr{ e^{qx} - 1- qx \mathbf{1}_{\cbr{|x| < 1}} } \,\mu(dx).
\end{equation*}
By considering first $X$ and then $-X$ we can confine ourselves to
considering the limit as $q \uparrow \infty$.
\begin{description}
\item[Case  `$\mu((0,\infty)>0$':]
 We shall show that $\int_{\mathbb{R}} \cbr{ e^{qx} - 1- qx
   \mathbf{1}_{\cbr{|x| < 1}} } \,\mu(dx)$ converges to $\infty$
 exponentially fast as $q \uparrow \infty$ which entails the same
 convergence for $f(q)$ and thus proves~\eqref{eq:33}.

 Indeed, if $\epsilon \in (0,1)$ is chosen such that
 $\mu((\epsilon,\infty))>0$ then
 \begin{equation}\label{eq:34}
 \int_{(\epsilon,\infty)} \cbr{ e^{qx} - 1- qx \mathbf{1}_{\cbr{|x| <
     1}} } \,\mu(dx) \geq (e^{q \epsilon}-(1+q)) \, \mu((\epsilon,\infty))
\end{equation}
converges to $\infty$ exponentially fast as $q \uparrow \infty$. At
the same time we can use the Taylor expansion
\[
e^{q x}= 1 + qx + \frac{1}{2} z^2 \text{ for some } z=z(qx) \text{
  between $0$ and $qx$}
\]
to see first that
 \begin{equation}\label{eq:35} 
\int_{(0,\epsilon)} \cbr{ e^{qx} - 1- qx \mathbf{1}_{\cbr{|x| <
     1}} } \,\mu(dx) \geq 0
\end{equation}
and second that
 \begin{equation}\label{eq:36} 
\int_{(-\infty,0)} \cbr{ e^{qx} - 1- qx \mathbf{1}_{\cbr{|x| <
     1}} } \,\mu(dx) \geq \int_{(-\infty,0)} qx \mathbf{1}_{\cbr{|x| \geq
     1}} \,\mu(dx)\,.
\end{equation}
Since the exponential moment condition~\eqref{eq:32} implies $\int |x|
\mathbf{1}_{\cbr{|x| \geq 1}} \,\mu(dx)<\infty$, the latter expression
may diverge to $-\infty$ at most linearly fast in $q$. The summation
of~\eqref{eq:34}, \eqref{eq:35} and~\eqref{eq:36} thus proves the
claimed exponential divergence of $\int_{\mathbb{R}} \cbr{ e^{qx} -
  1- qx \mathbf{1}_{\cbr{|x| < 1}} } \,\mu(dx)$ to $\infty$.

\item[Case `$\mu((0,\infty)) = 0$':] In this case we infer
  from~\eqref{eq:36} that the integral in the definition of $f(q)$
  may converge to $-\infty$ for $q \uparrow \infty$ at most at the
  deterministic linear rate $r \set \int_{(-\infty,0)} |x|
  \mathbf{1}_{\cbr{|x| \geq 1}} \,\mu(dx)<\infty$. Hence, $f(q)$ will
  diverge to $+\infty$ quadratically in $q$ if $c>0$ in which
  case~\eqref{eq:34} holds on $\Omega$. If $c=0$, $f(q)$ may
  converge to $-\infty$ at most with linear speed $|b|+r$. Hence,
  recalling that the argument for the previous case shows that  the limit $q \downarrow -\infty$
  in~\eqref{eq:33} holds on $\Omega$, it suffices to observe
  that~\eqref{eq:33} will hold for $q \uparrow \infty$ on $\cbr{X_{t+h}-X_t-h(|b|+r)<0}$, a set
  with positive probability if $c=0$, $\mu(0,\infty)=0$ unless $X$ is
  deterministic which we have ruled out from the start. 
\end{description}

\end{proof}

\subsection{A model of Barndorff-Nielsen Shephard-type}

Clearly, Laplace transforms can be computed in many financial models
and so our condition of decreasing exponential tails can be checked in
more complex models as well. By way of illustration let us consider a
stochastic volatility model in the style of \citet{bns2001}.

\begin{Lemma} \label{lem:6}
Let $Z = \left( Z_t \right)_{t \geq0}$ be a L\'evy subordinator 
on a filtered probability space $(\Omega,\cF,(\cF_t),\P)$ with L\'evy
measure $\mu$ such that
\begin{equation}\label{eq:37}
\kappa(\theta) \set \log \left( \E \left[ e^{\theta Z_1} \right]
\right) = \int_0^{\infty} \left( e^{\theta y} -1 \right)
\mu(dy)<\infty, \; \theta \in \RR
\end{equation}
and $W = \left( W_t \right)_{t \geq 0}$ be an independent Brownian motion of $Z$.  
Assume the payoff of the marketed claim is given by $$\psi = X_T\,$$
where $(X_t)_{t \in [0,T]}$ follows the Barndorff-Nielsen Shephard
dynamics
\begin{align*}
dX_t & = \left( m + \beta \sigma_t^2 \right) dt + \sigma_t dW_t + \rho dZ_{\lambda t}, \\
d\sigma_t^2 & = - \lambda \sigma_t^2 dt + dZ_{\lambda t}, \quad \sigma_0^2 > 0.
\end{align*}
for some constants $m, \beta\in \mathbb{R}$, $\lambda > 0$ and $\rho < 0$.
Then $\psi$ exhibits decreasing exponential tails in the sense of condition~\eqref{eq:12}.
\end{Lemma}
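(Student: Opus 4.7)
\medskip

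\textbf{Proof proposal.} The plan is to exploit the affine structure of the Barndorff-Nielsen-Shephard model to obtain a closed-form expression for $\E[e^{q\psi}|\cF_t]$ that is exponential-affine in $(X_t,\sigma_t^2)$, and then use the super-exponential growth of the subordinator's Laplace exponent to dominate all other terms in the log-ratio appearing in \eqref{eq:10}.

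First I would compute $\E[e^{qX_T}|\cF_t]$ in two steps. Conditioning additionally on the path of $Z$ on $[t,T]$ (equivalently, on the path of $\sigma^2$), the Brownian integral $\int_t^T \sigma_s\,dW_s$ is Gaussian, so
\[
\E[e^{q(X_T-X_t)}|\cF_t,Z]=\exp\!\Big(qm(T-t)+(q\beta+\tfrac{q^2}{2})\!\int_t^T\!\sigma_s^2\,ds+q\rho(Z_{\lambda T}-Z_{\lambda t})\Big).
\]
Substituting $\sigma_s^2=e^{-\lambda(s-t)}\sigma_t^2+\int_t^s e^{-\lambda(s-u)}dZ_{\lambda u}$ and using Fubini yields $\int_t^T\sigma_s^2\,ds=\frac{1-e^{-\lambda(T-t)}}{\lambda}\sigma_t^2+\int_t^T\frac{1-e^{-\lambda(T-u)}}{\lambda}\,dZ_{\lambda u}$. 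Taking $\E[\,\cdot\,|\cF_t]$ and applying the Lévy-Khintchine formula for a deterministic integrand against the subordinator $Z_{\lambda\cdot}$ (with Lévy measure $\lambda\mu$) then gives the affine formula
\[
\E[e^{qX_T}|\cF_t]=\exp\!\big(qX_t+qm(T-t)+A(q,t)\sigma_t^2+B(q,t)\big),
\]
with $A(q,t)=(q\beta+\tfrac{q^2}{2})\tfrac{1-e^{-\lambda(T-t)}}{\lambda}$ and $B(q,t)=\lambda\int_t^T\kappa(g(u,q))\,du$, where $g(u,q)=(q\beta+\tfrac{q^2}{2})\tfrac{1-e^{-\lambda(T-u)}}{\lambda}+q\rho$. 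Finiteness throughout comes from the hypothesis~\eqref{eq:37}.

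For consecutive sampling times $\tau_1<\tau_2\le T$ the log-ratio becomes
\[
\log\frac{\E[e^{qX_T}|\cF_{\tau_2}]}{\E[e^{qX_T}|\cF_{\tau_1}]}=q\,\Delta_{\tau_1,\tau_2}+A(q,\tau_2)\sigma_{\tau_2}^2-A(q,\tau_1)\sigma_{\tau_1}^2-\lambda\!\int_{\tau_1}^{\tau_2}\!\kappa(g(u,q))\,du,
\]
where $\Delta_{\tau_1,\tau_2}\set X_{\tau_2}-X_{\tau_1}-m(\tau_2-\tau_1)$. Since $Z$ is non-deterministic, $\mu\neq0$, so I may pick $y_0>0$ with $\mu([y_0,\infty))>0$ and obtain $\kappa(\theta)\ge(e^{\theta y_0}-1)\mu([y_0,\infty))$ for $\theta\ge0$. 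Choosing any $u^\ast\in(\tau_1,\tau_2)\cap[0,T)$ the factor $\tfrac{1-e^{-\lambda(T-u)}}{\lambda}$ is bounded below by $\tfrac{1-e^{-\lambda(T-u^\ast)}}{\lambda}>0$ on $[\tau_1,u^\ast]$, so $g(u,q)\to+\infty$ at quadratic rate in $|q|$ uniformly in $u\in[\tau_1,u^\ast]$ as $|q|\uparrow\infty$ (both for $q\to+\infty$ and $q\to-\infty$, since the coefficient of $q^2$ is positive). Consequently $\int_{\tau_1}^{\tau_2}\kappa(g(u,q))\,du$ grows exponentially in $q^2$, which swamps both the linear term $q\,\Delta_{\tau_1,\tau_2}$ and the at-most-quadratic random term $A(q,\tau_2)\sigma_{\tau_2}^2-A(q,\tau_1)\sigma_{\tau_1}^2$. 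Thus the log-ratio tends to $-\infty$ $\P$-a.s., which gives $\E[e^{q\psi}|\cF_{\tau_2}]/\E[e^{q\psi}|\cF_{\tau_1}]\to0$ a.s., establishing~\eqref{eq:12}.

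The main obstacle is the boundary case $\tau_2=T$, where $A(q,T)=0$ and $g(T,q)=q\rho$ is bounded, so one cannot rely on the integrand $\kappa(g(u,q))$ being large throughout $[\tau_1,T]$. Restricting to the subinterval $[\tau_1,u^\ast]$ with $u^\ast<T$ (still of positive length) solves this: on this subinterval, $g(u,q)$ still blows up quadratically in $|q|$, and the growth of $\kappa$ combined with~\eqref{eq:37} is enough to guarantee the required super-exponential dominance independent of the realized values of $X_{\tau_2}-X_{\tau_1}$ and $\sigma_{\tau_i}^2$.
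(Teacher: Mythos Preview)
Your argument is correct and follows the same overall skeleton as the paper's proof: compute the conditional Laplace transform in affine form, write down the log-ratio, and show that the integral $\int_{\tau_1}^{\tau_2}\kappa(g(u,q))\,du$ dominates the remaining (at most quadratic in $q$) terms. Your $g(u,q)$ coincides with the paper's $f(u,q)$, and where the paper cites \citet{nv2003} for the affine formula you derive it directly by first conditioning on the path of $Z$.

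The one substantive difference lies in how domination is established. The paper Taylor-expands $e^{f(s,q)y}$ to third order, integrates against $\mu(dy)$ and $ds$, and observes that the result is a polynomial of degree six in $q$ with positive leading coefficient $\tfrac{1}{48}\int_t^{t+h}\epsilon(s,T)^3\,ds\int_0^\infty y^3\,\mu(dy)$; polynomial growth of order six beats the quadratic random terms. You instead pick $y_0>0$ with $\mu([y_0,\infty))>0$ and use the cruder bound $\kappa(\theta)\ge(e^{\theta y_0}-1)\mu([y_0,\infty))$ for $\theta\ge0$, which gives growth that is exponential in $q^2$ once $g(u,q)\ge0$. Your estimate is both simpler and stronger, and your explicit treatment of the boundary case $\tau_2=T$ (restricting to $[\tau_1,u^\ast]$ with $u^\ast<T$ so that $\tfrac{1-e^{-\lambda(T-u)}}{\lambda}$ stays bounded away from zero) is a useful clarification that the paper's proof glosses over. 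Both arguments tacitly assume $\mu\neq0$; this is harmless since for $\mu=0$ the quadratic $A$-terms already force the log-ratio to $-\infty$.
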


\begin{proof}

The Laplace transform $\E \left[ e^{q X_T} | \cF_t \right]$ is computed in \citet{nv2003}:
\begin{align} \label{eq:38}
\condexp{e^{q X_T}} &=\\ \nonumber\quad \exp &\left( q \left( X_t + m(T-t) \right) + \left( q^2 + 2 \beta q \right) \frac{\epsilon(t,T)}{2}\sigma_t^2 + \int_t^T \lambda \kappa\left( f(s,q) \right) ds \right),
\end{align}
where
\begin{align*}
\epsilon(s,T) & \set \frac{1}{\lambda} \left( 1 - e^{-\lambda(T-s)} \right), \\
f(s,q) & \set \rho q + \frac{1}{2} \left( q^2 + 2\beta q \right) \epsilon(s,T).
\end{align*}

Moreover, according to Theorem~2 in \citet{nv2003}, under condition~\eqref{eq:37} $\E \left[ e^{q X_T} | \cF_t \right] < \infty$ for every $q \in \mathbb{R}$, i.e. $\psi=X_T$ has all exponential moments. 

We claim that for any $h>0$ we have $\nu_{t+h} \prec \nu_t$ where as
before $\nu_s \set \condprob[\cF_s]{\psi \in \cdot}$. 
From \eqref{eq:38} it suffices to show 
\begin{align} \label{eq:39}
\lim_{|q| \rightarrow \infty} \bigg\{q \left( X_{t+h} - X_t - mh \right)  & + \left( q^2 + 2\beta q\right)  \left( \frac{\epsilon(t+h,T)}{2}\sigma_{t+h}^2 - \frac{\epsilon(t,T)}{2}\sigma_{t}^2 \right) \\ \notag
& - \int_t^{t+h} \lambda \kappa(f(s,q)) ds  \bigg\} = - \infty.
\end{align}
To show this, we note that by Taylor expansion
\begin{align*}
\kappa(f(s,q)) &= \int_0^{\infty} \left( e^{f(s,q) y} - 1 \right)
\mu(dy) \\
&\geq \int_0^{\infty} \left(  f(s,q) y + \frac{1}{2} f(s,q)^2 y^2 + \frac{1}{6} f(s,q)^3 y^3  \right) \mu(dy)\,.
\end{align*}
For fixed $s \in [t, t+h]$ and $y > 0$ we obtain by direct calculation
\begin{align*}
F(s,q,y) &\set f(s,q) y + \frac{1}{2} f(s,q)^2 y^2 + \frac{1}{6}
f(s,q)^3 y^3 \\
&= \frac{1}{48} q^6 \epsilon(s,T)^3 y^3 + P_5(s, q, y),
\end{align*}
where $P_5$ is a polynomial of order $5$ in $q$. The coefficients of
this polynomial $P_5$ are functions of $s$ and $y$, where the
dependence of $P_5$ on $s$ is continuous. Moreover, since
$\kappa(\theta) < \infty$ for all $\theta$, we have that
\begin{equation*}
\int_0^{\infty} y^n \mu(dy) < \infty, \quad n = 1,2,\ldots.
\end{equation*}
Hence,
\begin{align*}
  \lambda \int_t^{t+h}  \kappa(f(s,q)) ds & \geq  \lambda \int_t^{t+h} \int_0^{\infty} F(s,q,y) \mu(dy) ds 
\end{align*}
and the latter expression is a polynomial of order $6$ in $q$ with
positive leading coefficient. Thus, the integral term dominates all
other terms in~\eqref{eq:39} and we obtain the claimed convergence.
\end{proof}

\bibliographystyle{plainnat}
\bibliography{mybib}

\end{document}